\newtheorem{thm}{Theorem}[section]
\newtheorem{cor}{Corollary}[section]
\newtheorem{lem}[thm]{Lemma}
\newtheorem{defn}{Definition}[section]
\begin{document}
\date{}
\title{Quantum Internal Model Principle: Decoherence Control }
\author{Narayan~Ganesan,~\IEEEmembership{Member,~IEEE,}
        Tzyh-Jong~Tarn,~\IEEEmembership{Life Fellow,~IEEE}
\thanks{N. Ganesan is with Computer and Information Sciences Department, University of Delaware, email: ganesan@udel.edu}
\thanks{T. J. Tarn is with the Department of Electrical and Systems Engineering, Washington University in St. Louis, e-mail: tarn@wuauto.wustl.edu}
}


\maketitle

\begin{abstract}

In this article, we study the problem of designing a Decoherence Control for quantum systems with the help of a scalable ancillary quantum control and techniques from geometric control theory, in order to successfully and {\em completely} decouple an open quantum system from its environment. We re-formulate the problem of decoherence control as a disturbance rejection scheme which also leads us to the idea of {\em Internal Model Principle} for quantum control systems which is first of its kind in the literature.

It is shown that decoupling a quantum disturbance from an open quantum system, is possible only with the help of a quantum controller which takes into account the model of the environmental interaction. This is demonstrated for a simple 2-qubit system wherein the effects of decoherence are completely eliminated. The theory provides conditions to be imposed on the controller to ensure perfect decoupling. Hence the problem of decoherence control naturally gives rise to the quantum internal model principle which relates the disturbance rejecting control to the model of the environmental interaction.

Classical internal model principle and disturbance decoupling focus on different aspects viz. perfect output tracking and complete decoupling of output from external disturbances respectively. However for quantum systems, the two problems come together and merge in order to produce an effective platform for decoherence control. In this article we introduce a seminal connection between disturbance decoupling and the corresponding analog for internal model principle for quantum systems.
\end{abstract}

%
\IEEEpeerreviewmaketitle

\section{Introduction}
Quantum Information and Quantum Computation hold the key to faster information processing and better and reliable communication\cite{chuang}. The properties, the quantum superposition, coherence and entanglement are vital to quantum information processing. Quantum measurements in general collapses a quantum state into set of bases decided by the observable. Decoherence~\cite{mensky} is the process by which the quantum system loses the coherence and superposition by continually interacting with the environment. A quantum system that is continuously interacting with the environment is called an Open Quantum System~\cite{openqusys}. Decoherence is conceptually equivalent to a continuous and forcible collapse of the wave function of the system onto the basis decided by the environment(also called the {\em pointer basis}~\cite{zurek}). In practice, this {\em adiabatic} process takes a finite time in the order of a few milliseconds thus rendering the quantum system classical. The problem of decoherence is currently the biggest roadblock towards exploitation of quantum speedup in computation. Thus far, many researchers have proposed multitude of ways to control decoherence in such open quantum systems, of which a few representative contributions include open-loop pulses~\cite{viola1}\cite{viola2}\cite{proto}\cite{lidar}, and control within Decoherence Free Subspaces~\cite{lidar}. Open loop pulses techniques are designed for systems that are acted upon by pre-programmed control pulses. Such methods also have the tendency to annihilate useful non-predetermined control in addition to suppressing decoherence. This is debilitating for quantum information processing and do not work under arbitrary and undetermined control. Another class of ideas is based on the Decoherence Free Subspace(DFS)~\cite{lidar} which are proven immune to decoherence due to the degeneracy of the basis vectors with respect to the decoherence interaction. Such methods aim at encoding and steering the quantum information within such a subspace at all time. Such a strategy does not admit arbitrary control Hamiltonians as any transition out of the subspace would subject the state to decoherence and hence loss of information. 

Another class of work which is based on symptom or syndrome correction is error correction codes\cite{calderbank}\cite{shor}. These methods aim at correcting the observed effects or symptoms of decoherence. These methods usually require ancillary bits to encode a specific quantum information in a redundant fashion and perform posterior transformations depending on the observed error syndrome. Such methods require number of ancillary/redundant bits proportional to the size of the original system and might not be scalable in the long run.

Hence most of the proposed techniques are either ad-hoc, or limited in control functions or not scalable. In this work, we propose a scalable, strategy which preserves action of useful controls and lets the system evolve according to the same while eliminating the effects of decoherence. This is applicable to a wide class of control as well as decoherence Hamiltonians.

Our work is orthogonal to almost all published work in the literature on decoherence control/quantum disturbance decoupling. We strongly believe that our study provides another avenue of research for decoherence control, informs the readers in this field of new directions to be investigated for the same problem and lends a deeper insight to quantum disturbance decoupling. Needless to say, reviewing the extensive nature of the work in the literature on using density matrix approach to study the problem, one cannot expect to give a complete solution to this important problem in one paper. We present some pivotal and important results on decoherence control in this paper with the following main contributions:
\begin{enumerate}
 \item {\bf Open Loop Invariance:} Utilize differential geometric tools to perform structural analysis and extract important information regarding susceptibility of the given system to decoherence. This helps determine {\em a priori} whether or not the given system is immune to decoherence interaction and can be used to avoid going through, tedious, sometimes futile and time consuming work.
 \item {\bf Active Controller:} Provide results in terms of the given control equation and available control resources, whether complete decoupling of the effects of disturbances is possible, with the help of an active controller.
 \item {\bf Ancillary Quantum Control:} If the system is not decouplable, design an effective control system via an ancillary quantum system and an active controller, that achieves complete decoupling under {\em arbitrary} and non-predetermined control. To the best of our knowledge the decoherence prevention in the presence of arbitrary useful control has not been addressed before. Moreover our results also provide a systematic way(not {\em ad-hoc}) to construct the desired control. 
\end{enumerate}
To this effect, we first provide the criteria for any system to be naturally immune to decoherence in terms of the Lie Algebra of the operators involved, in the presence of arbitrary user generated control. The treatment is powerful and general enough to yield Decoherence Free Subspace(DFS) as a special case of the open loop control. In addition, this yields best ways to encode a given quantum information that is immune to decoherence under arbitrary control. Secondly, for those systems that are not immune to decoherence under arbitrary control, and systems undergoing decoherence, we employ an active controller. At this point, we transition from an operator algebra method to a vector field method on the tangent space of the manifold as this offers additional valuable insights into the geometric nature of the problem.  We present a scalable construction involving an ancillary system(single ancillary qubit for a finite number of system qubits) to achieve complete decoherence control. All of the analyses mentioned above are performed in the presence of arbitrary user generated control which preserves useful work while eliminating only the effects of decoherence.


Finally, we present the simulation results with the above control strategy. The above mentioned ideas come together in a coherent way into what is called "Quantum Internal Model principle`` wherein the model of interaction with the environment is indispensable to efficient disturbance decoupling which will be discussed in the last section of the paper.


 
\section{Mathematical Preliminaries}
A pioneering effort to study quantum control systems using bilinear input affine model was carried out by Huang et. al\cite{htc}. The model has since found various applications and is extremely useful in analyzing the controllability properties of a quantum system on the state space of analytic manifolds\cite{nelson} which draws upon the previous results on controllability of finite dimensional classical systems by Sussman and Jurdjevic\cite{sussman} which in turn follows the results by Kunita\cite{kunita1}\cite{kunita2} and Chow~\cite{chow}. In this section we explore the conditions for a scalar function represented by a quadratic form to be invariant under the dynamics of the above model(with the additional assumption of time-varying vector fields) in the presence of a perturbation or interaction Hamiltonian. Such a formalism can be seen to readily relate to decoherence in open quantum systems wherein a perturbation Hamiltonian that couples the system to the environment can be seen to play the role of {\it disturbance}. Classical disturbance decoupling~\cite{isidori} \cite{isidori1} \cite{isidori2} provides insightful results on eliminating the effects of disturbance from output, however it will also be seen that the aforementioned is not quite similar to quantum decoherence control problem and one should be extremely careful in adapting the classical results to quantum regime.

Let the quantum control system corresponding to an open quantum system\cite{openqusys} interacting with the environment~(figure~(\ref{opqusys})) be given by, 
\begin{eqnarray}
\frac{\partial\xi(t,x)}{\partial t}=&[H_0\otimes \mathcal{I}_e(t,x)+\mathcal{I}_s \otimes H_e(t,x)+H_{SE}(t,x) +\sum_{i=1}^{r}u_i(t)H_i\otimes \mathcal{I}_e(t,x)]\xi(t,x)
\label{opqusys}
\end{eqnarray}
where, $\mathcal{H}_s$ is the system's Hilbert space and $\mathcal{H}_e$ the environment's Hilbert space. $\mathcal{H}_s$ could be finite or infinite dimensional and $\mathcal{H}_e$ is generally infinite dimensional. $\xi(t,x)$ is the wave function of the system and environment. $H_0$ and $H_e$ are operators corresponding to the drift Hamiltonian of the system and environment while $H_i$'s correspond to the control Hamiltonian of the system. $H_{SE}$ governs the interaction between the system and the environment. The above operators are skew Hermitian and assumed to be time varying and dependent on the spatial variable. In addition to the above dynamical equation, we introduce a complex scalar functional, $y(t)$, as a bilinear form that carries information about the system,
\begin{equation}
y(t,\xi)=\langle \xi(t,x) | C(t,x) |\xi(t,x) \rangle \label{opeq}
\end{equation}
The bilinear form of the function resembles the expected value of an operator, but it does not necessarily correspond to a measurement output. The operator $C(t,x)$ (Hermitian or non-Hermitian) is a time-varying operator acting on system Hilbert space. In all the subsequent analysis we study the invariance properties of the above scalar map of the system acted upon by decoherence interaction. This could be thought of as a function that has to be regulated in the presence of controls and disturbance.
\begin{figure}
\begin{center}
\includegraphics[width=3.5in, height=1.9in]{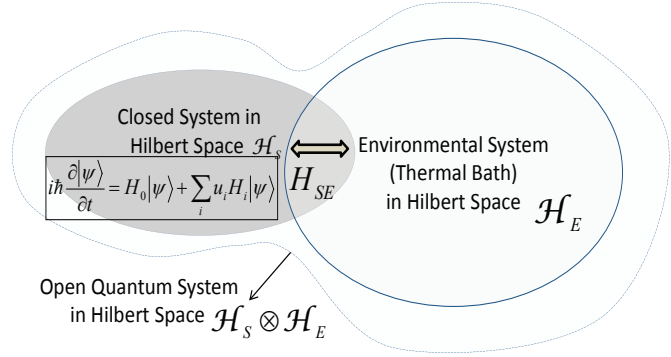}
\end{center}
\caption{An Open quantum system interacting with the environment via $H_{SE}$}
\end{figure}
\begin{defn}\label{invariance_definition}
Let $y(t,\xi)=f(t,x,u_1,\cdots,u_r,H_{SE})$ be a complex scalar map of the system as a function of the control functions and interaction Hamiltonian over a time interval $t_0\leq t\leq t_1$. The function is said to be invariant of the interaction Hamiltonian if
\begin{equation}
f(t,x,u_1,\cdots,u_r,H_{SE}) = f(t,x,u_1,\cdots,u_r,0) \label{cond9}
\end{equation}
for all admissible control functions $u_1,\cdots,u_r$ and a given interaction Hamiltonian $H_{SE}$.
\end{defn}

{\bf Forms of $y(t)$}. The above equation takes a quadratic form in the state $|\xi\rangle$ of the combined system and the environment. Some of the possible functions of interest, chosen for regulation are,

$(i)$ An expected value of a physical observable or an observation. The operator $C$, if Hermitian, can be a non-demolition observable and $y(t)$ is the output of the measurement performed on the system.

$(ii)$ By a suitable choice of the operator $C$ the value $y(t)$ can represent the coherence between the basis states of interest. It can be seen that a suitable value of the operator $C$ could yield the off-diagonal terms of the density matrix for the function $y(t)$. 

For example $C=|s_i\rangle \langle s_j| \otimes \mathbb{I}_e$ can be seen to yield the coherence between the orthogonal states of the system $|s_i\rangle$ and $|s_j\rangle$. For the pure state $\xi=\sum c_i |s_i\rangle$, $y(t) = c_i^* c_j$ and for the entangled  state $\xi = \sum c_i |s_i\rangle |e_i\rangle$ where $|e_i\rangle$ are the orthogonal states of the environment, a similar calculation yields $y=0$, denoting no coherence between the basis states.

$(iii)$ The operator $C$ could also be a linear combination of multiple non-Hermitian operators denoting coherence information of multiple basis vectors. This form for the operator $C$ is extremely useful in studying invariance of quantum information under external influences. This is presented in detail in the section~(\ref{calcs}) on DFS.

Let us define the corresponding free, control and the interaction vector fields as follows.
\begin{eqnarray}
K_0 &= (H_0 + H_e)|\xi\rangle\\
K_i &= H_i|\xi\rangle\\
K_I &= H_{SE}|\xi\rangle
\end{eqnarray}
Here we have suppressed the dependence of the Hamiltonians on time and spatial variable. As most of the practical systems are time-invariant and locality-invariant, this is a reasonable assumption. The following lemma\cite{ganesan} provides the basic conditions necessary for invariance of the scalar map with respect to the interaction vector field.
\begin{lem}
\label{opinvlem} Given that the quantum control system~(\ref{opqusys}) is analytic on the analytic manifold, the corresponding scalar map given by equation~(\ref{opeq}) is invariant under given $H_{SE}$(or the corresponding vector field, $K_I$) if and only if for all integers $p\geq0$ and any choice of vector fields $X_1,\cdots,X_p$ in the set $\{K_0,K_1,\cdots,K_r\}$,
\begin{equation}
L_{K_I} L_{X_1} \cdots L_{X_p}y(t,\xi) = 0; \mbox{for all }t,\xi \label{cond6}
\end{equation}
\end{lem}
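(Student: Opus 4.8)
The plan is to reduce the claim to a purely algebraic fact about iterated Lie derivatives, read off from a series representation of the output along trajectories. First I would remove the explicit time dependence by the usual augmentation: adjoin $t$ as an auxiliary coordinate, replace $K_0$ by $\partial_t+K_0$ and leave $K_1,\dots,K_r,K_I$ untouched, so that the system becomes autonomous and the phrase ``for all $t,\xi$'' in~(\ref{cond6}) becomes ``at every point of the (analytic) state manifold.'' On that manifold, and under the standing analyticity hypothesis carried over from~(\ref{opqusys}), for piecewise-constant controls (hence, by density, for all admissible ones) the quantity $y(t_1,\xi(t_1))$ admits a Chen--Fliess-type expansion: a convergent sum indexed by words $w=a_1a_2\cdots a_k$ over the alphabet $\{0,1,\dots,r,I\}$ of iterated integrals of the inputs --- the ``input'' attached to the letter $I$ being held constant and equal to $1$, since $K_I$ enters~(\ref{opqusys}) with unit coefficient --- multiplied by the iterated Lie derivatives $(L_{X_{a_1}}L_{X_{a_2}}\cdots L_{X_{a_k}}y)$ evaluated at the initial point, where $X_0=K_0$, $X_i=K_i$, $X_I=K_I$.

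For sufficiency I would first prove, by strong induction on $|w|$, that~(\ref{cond6}) forces $L_{X_{a_1}}\cdots L_{X_{a_k}}y\equiv 0$ for \emph{every} word $w$ containing the letter $I$ at least once, not merely those with $I$ in the outermost slot. The base case $w=(I)$ is~(\ref{cond6}) with $p=0$. For the step: if $a_1\neq I$ the truncated word $a_2\cdots a_k$ still contains $I$, so the inner expression vanishes by the induction hypothesis and $L_{X_{a_1}}$ annihilates it; if $a_1=I$ and $a_2\cdots a_k$ contains no further $I$ the expression is literally an instance of~(\ref{cond6}); and if $a_1=I$ while $a_2\cdots a_k$ still contains $I$ the inner expression vanishes by the induction hypothesis. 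Granting this, every $I$-bearing word contributes zero to the Chen--Fliess expansion of the disturbed output, so that expansion collapses to its sub-sum over words in $\{0,1,\dots,r\}$ --- and that sub-sum, with the very same iterated integrals, is precisely the expansion of the undisturbed output (i.e.\ $H_{SE}=0$) under the same controls. Hence $f(t,x,u,H_{SE})=f(t,x,u,0)$, which is Definition~\ref{invariance_definition}.

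For necessity I would run the same comparison backwards. Subtracting the undisturbed expansion from the disturbed one, all words over $\{0,\dots,r\}$ cancel and one is left with $\sum_{w\ni I}(\text{iterated integral})_w\,(L^{w}y)(\xi_0)\equiv 0$ for every admissible control and every $\xi_0,t_0$. Invoking the linear independence of the iterated-integral functionals together with short-interval/scaling limits that isolate individual words --- letting $t_1\downarrow t_0$ extracts $L_{K_I}y=0$ at first order, and successive differentiation in the control amplitudes singles out the words $I\,a_1\cdots a_p$ with $a_j\in\{0,\dots,r\}$ --- one concludes that every such coefficient is identically zero, which is exactly~(\ref{cond6}).

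The combinatorial induction is routine; the real work, and the step I expect to be the main obstacle, is the analytic bookkeeping --- legitimizing the Chen--Fliess expansion and the term-by-term identification of coefficients on the (in general infinite-dimensional, time-varying) manifold underlying~(\ref{opqusys}), and, in the necessity direction, extracting individual Lie-derivative coefficients from the vanishing of the series for all inputs while the disturbance channel is not free but clamped at $1$. This is precisely where the analyticity assumption and the cited representation/controllability results must be used with care; the rest is formal manipulation of Lie derivatives.
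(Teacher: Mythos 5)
Your argument is sound and is, in essence, the standard proof of this output-invariance lemma: the paper itself gives no in-text proof (it defers to the cited reference \cite{ganesan}, which follows the Isidori-style treatment), and that treatment is exactly your route --- a Chen--Fliess functional expansion of $y$ along trajectories plus the strong induction showing that vanishing of the coefficients with $L_{K_I}$ outermost forces the vanishing of every word containing the letter $I$. The two points you single out are indeed the only delicate ones, namely justifying the series on the infinite-dimensional analytic domain (this is precisely what the standing analyticity hypothesis, via Nelson's analytic vectors, is invoked for) and, in the necessity direction, extracting coefficients order by order because the disturbance channel is clamped at $1$ rather than free; your bootstrapping (first $L_{K_I}y=0$, then using it to kill the inner-$I$ words at the next order before polarizing in the control amplitudes) handles the latter correctly.
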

Lemma~{\ref{opinvlem}} implies that the necessary and sufficient conditions for the scalar map $y$ of an analytic system to be invariant of the interaction vector field, $K_I$ are,
\begin{align}
L_{K_I} y(t,\xi) &= 0 \nonumber{} \\
L_{K_I} L_{K_{i_0}} \cdots L_{K_{i_n}} y(t,\xi) &= 0 \label{cond7}
\end{align}
for all $t, \xi$, for $0\leq i_0,\cdots, i_n \leq r$ and $n\geq 0$, where $K_0,\cdots, K_r$ are the corresponding drift and control vector fields and $K_I$, the interaction vector field.

\section{Invariance for the Quantum System}\label{calcs}
With the preceding mathematical preliminaries in place we can now apply the above conditions to the quantum system with careful consideration to the nature of the complex functional and the analytic manifold. In this section we present two different cases, (a) the open loop invariance, (b) open loop invariance under an external controller. Both these cases are studied for an open quantum system acted upon by arbitrary useful control.

\subsection{Open Loop Invariance with Arbitrary Control.}
We can now state the condition for invariance of the scalar function $y(t)$ with respect to a perturbation or interaction Hamiltonian, the proof and motivation for which is presented in\cite{ganesan}.

\begin{thm}\label{thm1}
Let $\mathcal{C}_0=C(t)$ and for $n=1,2,\cdots$, define
\begin{align*}
&\tilde{C}_n=\mbox{span}\{ad^j_{H_i}\mathcal{C}_{n-1}(t)|j=0,1,\ldots;i=1,\ldots
,r\}\\
&\mathcal{C}_n=\left\{ \left(ad_H+\frac{\partial}{\partial
t}\right)^j\tilde{C}_n; j=0,1,\cdots \right\}\\
&\vdots
\end{align*}
where $H=H_0+H_e$, the drift Hamiltonian of the combined system and environment, and $H_i, i=1\cdots, r,$ the control Hamiltonians. Define a distribution of operators, $\tilde{\mathcal{C}}(t)=\mbox{span}\{\mathcal{C}_1(t),\mathcal{C}_2(t),\cdots{}, \mathcal{C}_n(t),\cdots{}\}$. The scalar function represented by equation~(\ref{opeq}) of the quantum system is decoupled from the environmental interactions if and only if,
\begin{equation}
[\tilde{\mathcal{C}}(t), H_{SE}(t)]=0 \label{ic}
\end{equation}
\end{thm}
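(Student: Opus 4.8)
The plan is to invoke Lemma~\ref{opinvlem}, which already expresses invariance of $y$ in the sense of Definition~\ref{invariance_definition} as the vanishing of the iterated Lie derivatives \eqref{cond6}--\eqref{cond7}, and then to convert those analytic identities into the algebraic commutator condition \eqref{ic}. The bridge is an elementary bookkeeping identity for the quadratic functional $y=\langle\xi|C|\xi\rangle$: since each control vector field is $K_i=H_i|\xi\rangle$ with $H_i$ skew-Hermitian, direct differentiation gives $L_{K_i}\,y=\langle\xi|\,ad_{H_i}C\,|\xi\rangle$, while the drift field $K_0=(H_0+H_e)|\xi\rangle$ carries the time-varying operator $C(t)$, so that $L_{K_0}\,y=\langle\xi|\,(ad_H+\partial/\partial t)C\,|\xi\rangle$. (The one subtle point is the relative sign of $ad_H$ and $\partial/\partial t$ in this identity; it is fixed by the convention chosen for the adjoint action, agrees with the statement, and is immaterial for the spans below.)

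First I would show, by induction on $p$, that for any word $X_1,\dots,X_p$ drawn from $\{K_0,K_1,\dots,K_r\}$ one has $L_{X_1}\cdots L_{X_p}\,y=\langle\xi|\,\Gamma\,|\xi\rangle$, where $\Gamma=\Gamma_{X_1,\dots,X_p}(t)$ is built from $C$ by applying, in order, the operator attached to each letter: $ad_{H_i}$ for a control letter $K_i$ and $ad_H+\partial/\partial t$ for the drift letter $K_0$. The second step is to identify $\operatorname{span}\{\Gamma_{X_1,\dots,X_p}:p\ge0,\ X_j\in\{K_0,\dots,K_r\}\}$ with the distribution $\tilde{\mathcal{C}}(t)$. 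This is the combinatorial core of the argument: an arbitrary word in the alphabet $\{ad_{H_1},\dots,ad_{H_r},\ ad_H+\partial/\partial t\}$ applied to $\mathcal{C}_0=C$ factors into alternating blocks of control letters and drift letters, and the two-stage recursion $\mathcal{C}_0\to\tilde{C}_1\to\mathcal{C}_1\to\tilde{C}_2\to\cdots$ is designed precisely to harvest one such block at a time; allowing the exponent $j=0$ at every stage is what guarantees that $\mathcal{C}_0$ itself, the pure-drift words, and the multi-index control words all lie inside $\tilde{\mathcal{C}}(t)$, so that the identification is an exact equality rather than merely an inclusion.

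Granting that identification, applying the interaction field $K_I=H_{SE}|\xi\rangle$ gives $L_{K_I}L_{X_1}\cdots L_{X_p}\,y=\langle\xi|\,[\Gamma_{X_1,\dots,X_p},H_{SE}(t)]\,|\xi\rangle$, so conditions \eqref{cond7} assert exactly that $\langle\xi|\,[\Gamma,H_{SE}(t)]\,|\xi\rangle=0$ for all $t$, all $\xi$, and all $\Gamma$ in the spanning set. On a complex Hilbert space a sesquilinear form $\xi\mapsto\langle\xi|M|\xi\rangle$ that vanishes identically forces $M=0$ (rescale $\xi$ to remove any normalization constraint, then polarize), so each $[\Gamma,H_{SE}(t)]=0$; by bilinearity of the commutator this is the same as $[\tilde{\mathcal{C}}(t),H_{SE}(t)]=0$, which is \eqref{ic}. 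Every step used — Lemma~\ref{opinvlem}, the symbol identity, the span identification, and the polarization argument — is an equivalence, so the implication runs in both directions and the theorem follows.

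The step I expect to be the genuine obstacle is the second one, matching $\operatorname{span}\{\Gamma_{X_1,\dots,X_p}\}$ with $\tilde{\mathcal{C}}(t)$: both inclusions need care, in particular checking that interleaving blocks of $ad_{H_i}$'s (with arbitrary indices and multiplicities) with blocks of $(ad_H+\partial/\partial t)$'s in every possible order is captured exactly by iterating the recursion, and that the recursion produces nothing outside $\tilde{\mathcal{C}}(t)$. A second, more routine point is pinning down the Lie derivative of $y$ along the time-varying drift — justifying the appearance of $\partial/\partial t$ alongside $ad_H$ and fixing its sign — which I would handle by differentiating $\langle\xi(t)|C(t)|\xi(t)\rangle$ along the drift flow and collecting terms.
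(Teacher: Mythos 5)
Your proposal is correct and follows essentially the same route as the paper: both reduce the claim via Lemma~\ref{opinvlem} to the vanishing of iterated Lie derivatives, rewrite each $L_{X_1}\cdots L_{X_p}y$ as $\langle\xi|T|\xi\rangle$ for an operator $T$ generated from $C$ by $ad_{H_i}$ and $ad_H+\partial/\partial t$ and lying in $\mathcal{C}_n$, and then pass from $\langle\xi|[T,H_{SE}]|\xi\rangle=0$ for all $\xi$ to $[T,H_{SE}]=0$ and extend by linearity over the span $\tilde{\mathcal{C}}(t)$ in both directions. The combinatorial span-identification you single out as the genuine obstacle is exactly the step the paper also leaves at the level of ``continuing so, in general $T_n\in\mathcal{C}_n$,'' so your more explicit treatment of it is a refinement rather than a departure.
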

\begin{proof} The proof follows by noting the equivalence of equation~(\ref{cond7}) with the above condition. Consider the following term $L_{K_{i_0}} \cdots L_{K_{i_k}} y(x)$ for any $k \geq 1 $, and $i_0,\cdots, i_k \in \{0,\cdots r\}$. It can be expressed as a bilienar form of an operator of Lie brackets of $H_{i_0},H_{i_1},\cdots H_{i_r}, C $ and their time derivatives as follows. In particular for $k=0$,
\[
L_{K_{i_0}}y=\langle\xi|[C,H_{i_0}]+\delta(i_0)\frac{d}{dt}C|\xi\rangle \triangleq \langle\xi|T_1|\xi\rangle \\
\]
where $\delta(i_0)$ is the delta function that takes value $1$ when $i_0=0$ and the operator $T_1$ as defined above, is such that $T_1 \in \mathcal{C}_1$. Similarly for $k=1$ we have
\begin{align*}
L_{K_{i_1}}L_{K_{i_0}}y
&=\langle\xi|[[C,H_{i_0}],H_{i_1}]+[\delta(i_0)\frac{d}{dt}C,H_{i_1}] +\delta(i_1)\frac{d}{dt}([C,H_{i_0}] + \delta(i_0)\frac{d}{dt}C)|\xi\rangle \\
&\triangleq \langle\xi|T_2|\xi\rangle
\end{align*}
and $T_2 \in \mathcal{C}_2$. Continuing so, in general we have $T_n \in \mathcal{C}_n$. Now via condition~(\ref{cond7}), we have $\langle\xi |[H_{SE},T_n]|\xi\rangle = 0, \forall \xi$, or $[H_{SE},T_n]=0$ in general for invariance. Since the condition is true for any $n \geq 0$ and any $T_n$ and since the vector space of bounded linear operators is complete we have $[H_{SE},\sum_{i=0}^\infty \alpha_i T_i]=\sum_{i=0}^\infty \alpha_i[H_{SE},T_i]=0$ for $\alpha_i \in \mathbb{R}$. The converse is true by noting that any operator in the distribution $\mathcal{C}$ (i.e) for any $T \in \mathcal{C}$ can be decomposed into a sum of operators $\sum\alpha_i T_i$ for $T_i \in \mathcal{C}_i$ and given $[H_{SE},\sum_{i=0}^\infty \alpha_i T_i]=0 \forall \alpha_i$ which is true only when $[H_{SE},T_n] =0$ for any $n$. Hence from the previous equations $L_{K_I}L_{K_{i_n}}L_{K_{i_{n-1}}}\cdots L_{K_{i_0}}=0$ for $i_0,\cdots, i_k \in \{0,\cdots r\}$.
\end{proof}
We now present two qualitatively different examples, a system undergoing decoherence and a system that is immune to decoherence due to its Decoherence Free Subspace, to illustrate the applicability of the above open loop invariance theorem to practical quantum control systems.
\subsubsection{\bf Electro-optic Amplitude Modulation}
Consider a driven electromagnetic system in a single mode subject to decoherence. The control system describing the oscillator under semiclassical approximation is
\begin{align*}
\frac{d}{dt}\xi(t)=&(\omega a^{\dagger} a + \sum_j \omega_j b_j^\dagger b_j + i u(t)(a^{\dagger} -a) + a\sum_j g_j^* b_j + a^{\dagger}\sum_{j}g_j b_j^\dagger)\xi(t)
\end{align*}
where the system represented by mode $a$ is coupled to a bath of infinite number of oscillators, $b_j$ with corresponding coupling constants $g_j$ and where $\xi(t)$ is the combined wave function of the system and bath. Here $a, b_j$ and $a^\dagger, b_j^\dagger$, denote the photon creation and annihilation operators respectively. The control $u(t)$ is the strength of the input current and $H_0=\omega a^\dagger a + \sum_j \omega_j b_j^\dagger b_j$, $H_1=(a^\dagger -a)$ and $H_{SE}=a\sum_j g_j^* b_j + a^{\dagger}\sum_{j}g_j b_j^\dagger$ are the drift, control and decoherence Hamiltonians respectively. Let the system be monitored by a non-demolition observable\cite{barginsky},
\begin{align*}
C(t)=a\exp(i\omega t)+a^\dagger\exp(-i\omega t)
\end{align*}
with the corresponding output of the non-demolition measurement given by $y(t)=\langle\xi(t)|C(t)|\xi(t)\rangle$. Following theorem~\ref{thm1} we have $[C(t),H_1]=\mathrm{e}^{i\omega t} + \mathrm{e}^{-i\omega t} = 2\cos(\omega t)$ with vanishing higher order commutators. Hence $\tilde{C}_1=\{c_1 C+c_2 \mathbbm{I}\cos(\omega t), \forall c_1,c_2 \in \mathbbm{R}\}$ and since $[C(t),H_0] + {\partial C}/{\partial t} = 0$ we have $\mathcal{C}_1=\tilde{C}_1$. Since the commutant of the interaction Hamiltonian $H_{SE}=a\sum_{j}g_j^* b_j + a^{\dagger}\sum_{j}g_j b_j^\dagger$ with the elements of the set $\mathcal{C}_1$ not all zero, the non-demolition measurement is $(i)$ not invariant of the interaction Hamiltonian, $(ii)$ no longer back action evading due to the presence of the interaction. The measurement of the observable $C(t)$ would thus reveal information about the decoherence of the system.

\subsubsection{\bf Decoherence Free Subspaces(DFS) of a collection of 2-level systems:}\label{2levelsystems} The above theorem can also be applied to the problem of analyzing the DFS\cite{lidar}. Decoherence free subspaces camouflage themselves so as to be undetected by the interaction Hamiltonian due to degeneracy of their basis states with respect to $H_{SE}$ and the special algebraic properties of the interaction Hamiltonians. For a collection of 2-level systems interacting with a bath of oscillators the corresponding Hamiltonian is,
\[
H=\frac{\omega_0}{2}\sum_{j=1}^{N}\sigma_z^{(j)} + \sum_k \omega_k b_k^\dagger b_k + \sum_k\sum_{j=1}^{N} \sigma_z^{(j)}(g_{k}b_k^\dagger + g_{k}^*b_k)
\]
where the system is assumed to interact through the collective operator $\sum_j \sigma_z^{(j)}$ and $g_k$'s describe coupling to the mode $k$. An inquiry into what information about the system is preserved in the presence of the interaction can be answered by expressing the operator $C$ acting on the system Hilbert space in its general form in terms of the basis projection operators, 
\begin{align*}
C(t)=\sum_{i,j=0..2^N-1}c_{ij}|i\rangle\langle j|
\end{align*}
and solving for condition~(\ref{ic}). For a simple N=2 system, we have after straight forward calculations
\begin{align*}
\tilde{\mathcal{C}}=\mathrm{span}\{\sum_{i,j} c_{ij}|i\rangle\langle
j|.(j^{(1)}-i^{(1)}+j^{(2)}-i^{(2)})^K ,\forall K=0,1,2...\}
\end{align*}
where $j^{(l)}$ denotes the $l^{th}$ letter ($0$ or $1$) of the binary word $j$. Equation~(\ref{ic}) which is, $[\tilde{\mathcal{C}},H_{SE}]=0$ implies,
\begin{align*}
\sum_{i,j} c_{ij}|i\rangle\langle j|.(j^{(1)}-i^{(1)}+j^{(2)}-i^{(2)})^K =0, \forall K=1,2,3...
\end{align*}
or non-trivially $j^{(1)}+j^{(2)}=i^{(1)}+i^{(2)}$, or the two words have equal number of $1's$. The above calculations are valid for any finite $N$, a specific example for $N=3$ is $C=|000\rangle\langle 000|+|001\rangle\langle 001|+|010 \rangle\langle 100|+|011\rangle\langle 101|$. Of particular interest are terms such as, $|011\rangle\langle 101|$ and $|010\rangle\langle 100|$ as the corresponding $y(t)=\langle\xi(t)|C(t)|\xi(t)\rangle$ which is a function of the coherence between the basis states $|011\rangle, |101\rangle$ and $|010\rangle,|100\rangle$ is predicted to be invariant under the interaction. It is worth noting that the operator $C(t)$ acting on system Hilbert space here need not necessarily be Hermitian and only describes the quantum information that is preserved.

{\em Decoherence in the presence of control:} In the presence of the external controls $H_i= u_i\sigma_x^{(i)}$, the invariance condition~(\ref{ic}) is no longer satisfied for the operator $C=\sum_{i,j=0,..,2^N-1} |i\rangle\langle j|, i\neq j$ as $[[C,\sigma_x^{(i)}],\sigma_z^{(j)}]\neq 0$ and hence the coherence between the states $|i\rangle, |j\rangle$ is not preserved. This is because of the transitions outside the DFS caused by the control Hamiltonian. The above formalism is helpful in analyzing in general, class of information that would be preserved in the presence of interaction Hamiltonian which in turn could tell us about how to store information reliably in a quantum register in the presence of decoherence. Hence, in contrast to passive decoherence avoidance in the absence of external controls, this approach can  be used to determine the prudent means to {\em encode} quantum information, that stays immune to the decoherence interaction, even in the presence of arbitrary controls.

In summary, the notion of open loop invariance, (a) naturally gives rise to DFS described by the operator $C$, and helps perform extended analysis on the same, (b) was used to determine if a given scalar function(in this case non-demolition measurement) was affected by decoherence, and (c) could be used to design an operator $C$, which under the given system Hamiltonians and decoherence interaction would generate an invariant scalar map.
\subsection{Controller: $u = \alpha(\xi) + \beta(\xi)v$}
In this section we consider the case where the scalar map is not invariant in the presence of decoherence control(for eg. the coherence between states $|10\rangle$ and $|01\rangle$ of a 2 qubit system) and investigate the role of an active controller in order to decouple the same. From this point onward(and all the subsequent sections) we assume that the scalar map is not invariant, and that the decoherence operator($H_{SE}$) affects the scalar map. We study the effectiveness of an external controller whose control is dependent on the state of the system, in order to achieve invariance of the scalar map. The controller is assumed to be of the form, $u = \alpha(\xi) + \beta(\xi)v$ where the $u$ is implemented as a transformation involving matrices $\alpha$ and $\beta$. Here $v$ is $1\times r$, $u$ is $1\times r$, $\alpha(\xi)$, is a $1\times r$ vector and $\beta(\xi)$ is a non-singular matrix of of size $r\times r$(where $r$ is the number of open-loop controls). 

The realizability of the controller and implementation of the obtained control is a part of the ongoing work. It could potentially be realized with the help of a quantum machine(coherent control) or via quantum measurement/estimation theory. As outlined earlier, the focus of the current work is geometric analysis and a method for decoherence control design. The realization of the controller is an important open problem currently under study which could yield a few dissertations by itself.


The above form of an external control($u = \alpha(\xi) + \beta(\xi)v$) is general enough to encompass all popular control strategies as well as preserve the input-affine form of the original quantum control system even after the application of control. Consider the following system that is acted upon by the above controller of the form $u=\alpha(\xi)+\beta(\xi)v$,
\begin{equation}
\frac{\partial}{\partial t} \xi(t,x) = (H_0+H_e+\sum \alpha_i H_i)\xi(t) +
\sum_{i=1}^r v_i \sum_{j=1}^r \beta_{ij}H_j\xi(t) + H_{SE}\xi(t)\label{fbsys}
\end{equation}
where the new drift vector field, $\tilde{K}_0=(H_0+H_e+\sum \alpha_i H_i)\xi(t)$, control vector fields $\tilde{K}_i=\sum_j
\beta_{ij}H_j\xi(t)$, and decoherence interaction $K_I=H_{SE}\xi(t)$, are identified for the controlled system.

As stated earlier, the necessary and sufficient conditions for a scalar function $y(t)$ of the system to be invariant of the interaction vector field are,
\begin{align}
L_{K_I} y(t) &= 0 \nonumber{} \\
L_{K_I} L_{\tilde{K}_{i_0}} \cdots L_{\tilde{K}_{i_n}} y(t) &= 0 \label{cond7fb}
\end{align}
for $0\leq i_0,\cdots, i_n \leq r$ and $n\geq 0$. The above conditions can be stated in the form of the following result, which states it in terms of the operators defined in theorem(\ref{thm1}).
\begin{defn}\label{decouplable_definition}
The scalar map~(\ref{opeq}) of the system~(\ref{opqusys}) is said to be decouplable if there exist control parameters $\alpha$ and $\beta$, such that, under the corresponding controlled system~(\ref{fbsys}), the scalar map~(\ref{opeq}) is invariant in sense of
definition~(\ref{invariance_definition}).
\end{defn}
We now state the condition for the decouplability.
\begin{lem}\label{opralg_controller}
For the scalar map~(\ref{opeq}) of the quantum system~(\ref{opqusys}), is decouplable only if,
\begin{align*}
&[C,H_{SE}]=0\\
&[\tilde{\mathcal{C}}(t), H_{SE}(t)]\subset \tilde{\mathcal{C}}(t)
\end{align*}
where the distribution $\tilde{\mathcal{C}}(t)$ is as defined in theorem(~\ref{thm1}).
\end{lem}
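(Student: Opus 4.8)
The plan is to work straight from the fundamental invariance test of Lemma~\ref{opinvlem}, now read off the closed-loop system~(\ref{fbsys}) rather than~(\ref{opqusys}): the scalar map is decouplable iff there exist $\alpha(\xi),\beta(\xi)$ with $\beta$ nonsingular such that $L_{K_I}L_{\tilde{K}_{i_0}}\cdots L_{\tilde{K}_{i_n}}y=0$ for all $n\geq 0$ and all $i_0,\dots,i_n\in\{0,\dots,r\}$, where $\tilde{K}_0,\tilde{K}_i$ are the closed-loop drift and control vector fields of~(\ref{fbsys}) and $K_I=H_{SE}\xi$. Since the feedback alters neither $K_I$ nor $C$, the order-zero test is immediate: $L_{K_I}y=\langle\xi|[C,H_{SE}]|\xi\rangle$, which vanishes for every $\xi$ iff $[C,H_{SE}]=0$, by polarization of the complex bilinear form. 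That is the first necessary condition; it is independent of $\alpha,\beta$ and is not subsumed by the second (although $C\in\mathcal{C}_1\subset\tilde{\mathcal{C}}$, the inclusion $[\tilde{\mathcal{C}},H_{SE}]\subset\tilde{\mathcal{C}}$ would only give $[C,H_{SE}]\in\tilde{\mathcal{C}}$, not $[C,H_{SE}]=0$).

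For the second condition I would first establish, by induction on $n$, that
\[
L_{\tilde{K}_{i_0}}\cdots L_{\tilde{K}_{i_n}}y=\sum_{k}\phi_k(\xi)\,\langle\xi|T_k|\xi\rangle,\qquad T_k\in\tilde{\mathcal{C}}(t),
\]
where the $\phi_k$ are smooth functions assembled from the entries of $\alpha,\beta$ and their iterated Lie derivatives. The inductive step recycles the commutator bookkeeping from the proof of Theorem~\ref{thm1}: the operators underlying $L_{\tilde{K}_i}y$ and $L_{\tilde{K}_0}y$ are $\sum_j\beta_{ij}\,ad_{H_j}C$ and $(ad_H+\partial_t)C+\sum_i\alpha_i\,ad_{H_i}C$, which sit in $\mathcal{C}_1\subset\tilde{\mathcal{C}}$; the maps $T\mapsto ad_{H_j}T$ and $T\mapsto(ad_H+\partial_t)T$ leave $\tilde{\mathcal{C}}$ invariant by the very way it is generated in Theorem~\ref{thm1}; and the Leibniz rule applied to the $\phi_k$ under a further $L_{\tilde{K}}$ only spawns more terms of the same shape, with $\phi_k$ replaced by its derivative. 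Because $\beta(\xi)$ is nonsingular, as $(i_0,\dots,i_n)$ ranges over all choices the operators $T_k$ that actually occur exhaust a spanning set of $\tilde{\mathcal{C}}(t)$, so no generator of $\tilde{\mathcal{C}}$ is invisible to the test. Applying $L_{K_I}$ and using $L_{K_I}\langle\xi|T_k|\xi\rangle=\langle\xi|[T_k,H_{SE}]|\xi\rangle$, the test becomes
\[
\sum_k (L_{K_I}\phi_k)\,\langle\xi|T_k|\xi\rangle+\sum_k \phi_k\,\langle\xi|[T_k,H_{SE}]|\xi\rangle=0 .
\]
The first sum is a function-combination of bilinear forms of operators already in $\tilde{\mathcal{C}}$; hence so is the second, and letting the multi-indices run one extracts $[T_k,H_{SE}]\in\tilde{\mathcal{C}}(t)$ for every occurring $T_k$, i.e.\ $[\tilde{\mathcal{C}}(t),H_{SE}(t)]\subset\tilde{\mathcal{C}}(t)$. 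Concretely, one restricts first to multi-indices built only from the control fields $\tilde{K}_i$ ($i\geq 1$), so the coefficients are polynomials in the entries of $\beta$ and their $K_I$-derivatives, inverts $\beta$ pointwise to isolate the individual $\langle\xi|[ad_{H_j}C,H_{SE}]|\xi\rangle$, and then climbs the generation layers of $\tilde{\mathcal{C}}$ by admitting larger $n$ and the drift field $\tilde{K}_0$.

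The main obstacle is this last disentangling: the genuine commutator contributions $\phi_k\langle\xi|[T_k,H_{SE}]|\xi\rangle$ and the feedback-derivative contributions $(L_{K_I}\phi_k)\langle\xi|T_k|\xi\rangle$ live in the same module over $\{\langle\xi|T|\xi\rangle:T\in\tilde{\mathcal{C}}\}$, so separating them needs either a functional-independence/genericity argument in $\xi$ or the pointwise linear-algebra reduction in $\beta$ just sketched. This is also exactly the mechanism that relaxes the open-loop equality $[\tilde{\mathcal{C}},H_{SE}]=0$ of Theorem~\ref{thm1} to the inclusion $[\tilde{\mathcal{C}},H_{SE}]\subset\tilde{\mathcal{C}}$: the feedback can cancel any commutator term that already points back into $\tilde{\mathcal{C}}$, but nothing beyond that. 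The remaining steps are a mechanical extension of the argument in Theorem~\ref{thm1}.
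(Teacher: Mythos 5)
Your proposal is correct and takes essentially the same route as the paper: both expand the closed-loop Lie-derivative chain $L_{K_I}L_{\tilde{K}_{i_0}}\cdots L_{\tilde{K}_{i_n}}y$ into bilinear forms of operators in $\tilde{\mathcal{C}}(t)$ with coefficients built from $\alpha,\beta$ and their Lie derivatives, separate the terms carrying a commutator with $H_{SE}$ from those that do not, and read off $[C,H_{SE}]=0$ at order zero and the relaxed inclusion $[\tilde{\mathcal{C}}(t),H_{SE}]\subset\tilde{\mathcal{C}}(t)$ at higher orders. The only difference is presentational: the paper computes the low-order terms explicitly and asserts the extension, whereas you state the induction in general and flag the disentangling of the two families of terms as the delicate step (a point the paper's own proof passes over silently).
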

\begin{proof} The above lemma only provides the necessary condition for the invariance. In the equations below we suppress the summation symbol and follow Einstein's convention, wherein a summation has to be assumed whenever a pair of the same index appears. Expanding out the corresponding terms of the equations~(\ref{cond7fb}), where the following equalities must hold for all $\xi$,
\begin{align*}
&L_{K_I}y=\langle\xi|[C,H_{SE}]|\xi\rangle=0\\
&L_{K_I}L_{\tilde{K}_i}y=\langle\xi|[[C,\beta_{ij}H_j],H_{SE}]+[C,H_j]L_{K_I}
\beta_{ij}|\xi\rangle=0\\
&L_{\tilde{K}_i}L_{\tilde{K}_0}y=\langle\xi|[\dot{C},\beta_{il}H_l] +[[C,H+\alpha_j H_j], \beta_{il}H_l] +[C,H_j]L_{\tilde{K}_i}\alpha_j|\xi\rangle=0
\end{align*}
\begin{align}
&L_{K_I}L_{\tilde{K}_i}L_{\tilde{K}_0}y \nonumber \\
=&\langle\xi|[[\dot{C},\beta_{il}H_l],H_{SE}]+[[C,H_j]L_{\tilde{K}_i}\alpha_j,H_{SE}] + [[[C,H+\alpha_j H_j], \beta_{il}H_l],H_{SE}] \nonumber \\
&+[\dot{C},H_l]L_{K_I}\beta_{il}+[C,H_j]L_{K_I}L_{\tilde{K}_i}\alpha_j + [[C,H],H_l]L_{K_I}\beta_{il} + [[C,H_j],H_l]L_{K_I}\alpha_j\beta_{il}|\xi\rangle \nonumber \\
=&0 \label{egeq}
\end{align}
The last equation above, provides a set of simultaneous equations to solve for the control parameters $\alpha$ and $\beta$ of the active controller, in order to achieve invariance. The above equation contains two types of terms. The terms containing $H_{SE}$ and terms that do not. The terms whose commutator with $H_{SE}$ is computed, is found to belong to the distribution $[\tilde{\mathcal{C}}(t),H_{SE}]$ whereas the terms without $H_{SE}$ is seen to belong to $\tilde{\mathcal{C}}(t)$. The above calculation can be extended to finite number of terms to arrive at the result. In order for the above equality to hold, in general one finds that, the condition for decouplability is relaxed to,
\begin{equation}
[\tilde{\mathcal{C}}(t),H_{SE}]\subset\tilde{\mathcal{C}}(t)(\mbox{compare to eq. \ref{ic}}) \label{condfb}
\end{equation}
\end{proof}
However, in order to solve equation~(\ref{egeq}) and consequently equation~(\ref{condfb}) for the control parameters($\alpha$ and $\beta$), it is important to study the properties of the operators in $\tilde{\mathcal{C}}$. It can be seen that the distribution $\tilde{\mathcal{C}}$, is generated by operators acting only on system Hilbert Space ($C, H_0, H_1 \cdots H_r$), whereas the operator $H_{SE}$ acts on the joint, system + environment Hilbert Space. Therefore, the above equation~(\ref{condfb}) cannot be solved for the control parameters unless $[\tilde{\mathcal{C}}(t),H_{SE}] = 0$. This is same as the open loop invariance without the active controller, which implies that as long as $H_{SE}$ and $H_0,\cdots, H_r$, act on different Hilbert spaces, the controller cannot act as an effective tool in decoupling the system.

Alternatively, for the controller to be an effective tool in solving the decoherence problem, the control Hamiltonians $H_i$'s have to act non-trivially on both the Hilbert spaces which would enable all the operators in equation~(\ref{condfb}) to act on the joint  system-environment Hilbert space.

In the rest of the paper we will outline a construction, involving an ancillary system and the active controller of the form $u=\alpha(\xi)+\beta(\xi)v$ in order to decouple and achieve complete invariance. We will revisit the 1 and 2-qubit systems
and present the applicability of the construction in achieving the final goal of decoherence control.

The operator algebra method outlined above was helpful in arriving at the invariance condition for open loop, with and without the active  controller. However, it only provides the necessary condition in order to be able to achieve invariance under the action of the controller.
Hence at this point we resort to an alternative approach to analyze the same problem, via the {\em invariant} subspace within the tangent space of the analytic manifold. In contrast to the operator algebra approach, which was based on the operators $H_0, H_1, \cdots, H_r$ and $H_{SE}$, we now use the control vector fields $K_1,\cdots, K_r$, and the decoherence vector field $K_I$, in order to analyze the invariance of the function, $y$ and leverage the geometry of system on the analytic manifold.

\section{Invariant Subspace Formalism}
In this section we present a alternate formalism to analyze the invariance of the function $y$.
\begin{defn}\label{obs}
Any vector field $K_\tau = K_I$ satisfying equations~(\ref{cond7}) is said to be in the orthogonal subspace of the observation space spanned by the one-forms,
\begin{align}
\mathcal{O} \triangleq \mbox{span}\{&dy(t,\xi),dL_{K_{i_0}}y(t,\xi),\cdots, dL_{K_{i_0}} \cdots L_{K_{i_n}} y(t,\xi), \cdots \} \nonumber\\
&\forall 0\leq i_0,\cdots, i_n \leq r \mbox{ and } n\geq 0
\end{align}
Denoted by $K_\tau \in \mathcal{O}^\perp$
\end{defn}
\begin{lem}
The distribution $\mathcal{O}^\perp$ is invariant with respect to the vector fields $K_0,\cdots,K_r$ under the Lie bracket operation. i.e., if $K_\tau \in \mathcal{O}^\perp$, then $[K_\tau,K_i] \in \mathcal{O}^\perp$ for $i=0,\cdots, r$
\end{lem}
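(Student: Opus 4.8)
\emph{Proof proposal.} The plan is to reduce the statement to the classical Lie-derivative commutator identity, in the spirit of the Hermann--Krener treatment of the observation codistribution. First I would unwind the definitions. By Definition~\ref{obs} and equations~(\ref{cond7}), a vector field $K_\tau$ lies in $\mathcal{O}^\perp$ precisely when $L_{K_\tau}\psi = 0$ for every scalar function $\psi$ in the generating family
\[
\Psi = \{\, y,\ L_{K_{i_0}}y,\ \ldots,\ L_{K_{i_0}}\cdots L_{K_{i_n}}y,\ \ldots\ :\ 0\le i_0,\ldots,i_n\le r,\ n\ge 0 \,\},
\]
since $\langle d\psi, K_\tau\rangle = L_{K_\tau}\psi$ and $\mathcal{O} = \mathrm{span}\{d\psi : \psi \in \Psi\}$. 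The observation I want to exploit is that $\Psi$ is closed under the operations $\psi \mapsto L_{K_i}\psi$ for each $i \in \{0,\ldots,r\}$: prepending one more drift or control index to an admissible string again produces an admissible string, so $L_{K_i}\psi \in \Psi$ whenever $\psi \in \Psi$.

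The key step is then the identity $L_{[X,Y]}\phi = L_X L_Y \phi - L_Y L_X \phi$, valid for any vector fields $X,Y$ and any scalar function $\phi$ (the derivation/Jacobi property of the bracket acting on functions). Applying it with $X = K_\tau \in \mathcal{O}^\perp$, $Y = K_i$ for a fixed $i \in \{0,\ldots,r\}$, and $\phi = \psi$ an arbitrary member of $\Psi$,
\[
L_{[K_\tau, K_i]}\psi = L_{K_\tau}\big(L_{K_i}\psi\big) - L_{K_i}\big(L_{K_\tau}\psi\big).
\]
In the first term $L_{K_i}\psi \in \Psi$ by the closure remark, so $L_{K_\tau}(L_{K_i}\psi) = 0$ because $K_\tau \in \mathcal{O}^\perp$; in the second term $L_{K_\tau}\psi = 0$ for the same reason, hence $L_{K_i}(L_{K_\tau}\psi) = L_{K_i}(0) = 0$. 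Thus $L_{[K_\tau, K_i]}\psi = 0$ for every $\psi \in \Psi$, i.e. $[K_\tau, K_i]$ annihilates every generator of $\mathcal{O}$, so $[K_\tau, K_i] \in \mathcal{O}^\perp$. (Since $\mathcal{O}^\perp$ is a module over the smooth functions and $[fK_\tau, K_i] = f[K_\tau,K_i] - (L_{K_i}f)K_\tau$, the whole distribution $\mathcal{O}^\perp$ — not just its generators — is then bracket-invariant under $K_0,\ldots,K_r$.)

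I expect two points to require care, neither conceptual. First, the vector fields $K_0,\ldots,K_r$ and the functional $y$ are explicitly time-dependent, and $K_\tau = K_I$ may be as well, so the Lie derivatives and the commutator identity above must be read on the extended manifold carrying $t$ as an additional coordinate — this is exactly why $ad_H + \partial/\partial t$, rather than $ad_H$ alone, enters Theorem~\ref{thm1}; once one works on that extended space the computation goes through verbatim, with $\partial/\partial t$ playing the role of one further drift field. Second, for ``$\mathcal{O}^\perp$'' to be a genuine distribution one needs $\mathcal{O}$ to have locally constant rank, which the standing analyticity hypothesis on the system supplies. The main obstacle is therefore purely bookkeeping — propagating the time-dependence consistently through the iterated Lie derivatives — while the algebraic core of the argument is the single-line commutator identity together with the closure of $\Psi$ under each $L_{K_i}$.
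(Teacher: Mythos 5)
Your proposal is correct and follows essentially the same route as the paper: the paper likewise subtracts pairs of the conditions~(\ref{cond7}) to get $L_{[K_0,K_I]}y=0$, then iterates using the fact that the iterated Lie derivatives $L_{K_{i_0}}\cdots L_{K_{i_n}}y$ are closed under one more application of $L_{K_j}$, which is exactly your commutator-identity-plus-closure-of-$\Psi$ argument. Your added remarks on the module structure of $\mathcal{O}^\perp$, the extended time manifold, and the constant-rank requirement are sensible refinements but do not change the substance.
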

Equations~(\ref{cond7}) after subtraction imply $L_{K_0}L_{K_I}y(t) - L_{K_I}L_{K_0}y(t)= L_{[K_0,K_I]}y(t)=0$. Similarly it is possible to derive other necessary conditions viz. $L_{[K_0,K_I]}L_{K_j}y(t)=0$ and $L_{K_j}L_{[K_0,K_I]}y(t)=0$ for invariance\cite{ganesan} which in turn imply that $L_{[[K_0,K_I],K_j]}y(t)=0$. In fact the above pattern of equations can be extended to any number of finite Lie brackets to conclude that,
\begin{equation}
L_{[[\cdots[K_I,K_{i_1}],K_{i_2}]\cdots K_{i_k}]}y(t)=0
\end{equation}
$1\leq i_1,i_2,\cdots,i_k \leq r$, which leads us to the definition of an {\em invariant} distribution $\Delta$ of vector fields with the following properties,
\begin{eqnarray}
K_\nu \in \Delta \implies L_{K_\nu}y(t)=0\\
K_\nu, K_\mu \in \Delta \implies [K_\mu, K_\nu] \in \Delta
\end{eqnarray}
and for any control/drift vector field $K_0, K_1,\cdots, K_r$,
\begin{equation}
K_\nu \in \Delta \implies [K_\nu,K_i]\in \Delta,\forall i\in{0,\cdots,r}
\end{equation}
This distribution is {\em involutive} and it is also observed(from the definition) that $K_I \in \Delta$. Such a distribution $\Delta$ is contained within $\mbox{ker}(dy(t,\xi))$. Hence $K_I\in \Delta \subset \mbox{ker}(dy)$. From the necessary conditions listed above the distribution is {\em invariant} under the control and drift vector fields $K_0,\cdots, K_r$. Simply stated,
\begin{equation}
[\Delta, K_i]\subset \Delta, \forall i\in{0,\cdots,r}\label{invariantwrtki}
\end{equation}
\begin{figure}
\begin{center}
\includegraphics[width=3.25in, height=2.0in]{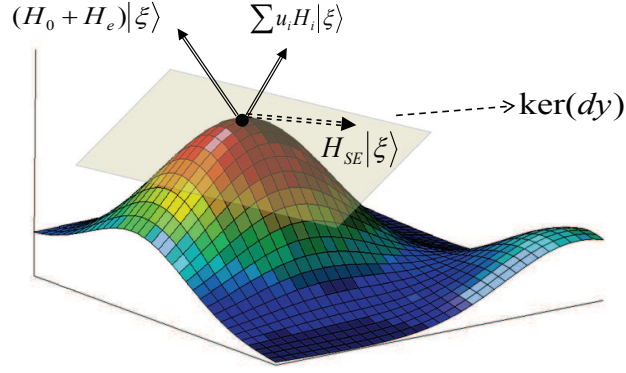}
\end{center}
\caption{The surface represents equal value of $y(t)$ for different values of $|\xi\rangle$ and the corresponding vector fields at the point $|\xi\rangle$ is denoted by the arrows and nullspace $\mbox{ker}(dy)$ is the tangent to the surface at the point $|\xi\rangle$. The dotted arrows lie in $\mbox{ker}(dy)$ and the solid arrows need not necessarily be contained in the same. The necessary condition for open loop invariance requires that $H_{SE}|\xi\rangle\in \mbox{ker}(dy)$.}\label{kerdy}
\end{figure}
It is also to be noted that the above calculations are reversible and the original necessary and sufficient conditions~(\ref{cond7}) can be derived starting from the invariant distribution. Hence the necessary and sufficient
conditions for open loop decouplability can now be restated (without proof) in terms of the invariant distribution\cite{ganesan2}.
\begin{thm}\label{invsp}
The scalar map $y(t)$ is unaffected by the interaction vector field $K_I$ if and only if there exists a distribution $\Delta$ with the following properties,\\
(i) $\Delta$ is invariant under the vector fields $K_0,K_1,\cdots, K_r$ i.e,\\
\begin{equation}
[\Delta, K_i]\subset \Delta, \forall i\in{0,\cdots,r}
\label{InvariantDistribution}
\end{equation}
(ii) $K_I\in \Delta \subset \mbox{ker}(dy(t))$
\end{thm}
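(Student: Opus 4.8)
The plan is to establish the biconditional by leveraging the chain of reversible implications sketched informally in the paragraphs preceding the statement, thereby reducing everything to the already-proven necessary and sufficient conditions of equation~(\ref{cond7}) (equivalently Lemma~\ref{opinvlem}). For the forward direction, I would assume $y(t)$ is unaffected by $K_I$ and exhibit the required distribution explicitly: take $\Delta$ to be the smallest distribution containing $K_I$ that is involutive and invariant under $K_0,\dots,K_r$ under the Lie bracket — i.e. the distribution generated by all iterated brackets $[[\cdots[K_I,K_{i_1}],\dots],K_{i_k}]$ with $0\le i_1,\dots,i_k\le r$ and $k\ge 0$. Property (i) then holds by construction. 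For property (ii), I must show $\Delta\subset\ker(dy(t))$, which amounts to showing $L_{K_\nu}y(t)=0$ for every generator $K_\nu$ of $\Delta$; this is precisely the content of the displayed iterated identity $L_{[[\cdots[K_I,K_{i_1}]\cdots K_{i_k}]]}y(t)=0$, obtained by repeatedly subtracting pairs of equations from~(\ref{cond7}) and using the Jacobi-type identity $L_{[X,Y]}=L_XL_Y-L_YL_X$ for Lie derivatives acting on the scalar $y$. That $K_I\in\Delta$ is immediate from the definition.

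For the converse, suppose such a $\Delta$ exists. I would argue that the defining properties force all the bracket conditions~(\ref{cond7}) to hold. Since $K_I\in\Delta\subset\ker(dy)$, we get $L_{K_I}y(t)=0$. Invariance of $\Delta$ under each $K_i$ (property (i)) together with $\Delta\subset\ker(dy)$ gives $[K_I,K_{i_0}]\in\Delta\subset\ker(dy)$, hence $L_{[K_I,K_{i_0}]}y=0$; combined with $L_{K_I}y=0$ and the Lie derivative commutation identity, this yields $L_{K_I}L_{K_{i_0}}y(t)=L_{K_{i_0}}L_{K_I}y(t)=0$. Iterating — each additional bracket with some $K_{i_j}$ stays inside $\Delta\subset\ker(dy)$ — one peels off the Lie derivatives one at a time and recovers $L_{K_I}L_{K_{i_0}}\cdots L_{K_{i_n}}y(t)=0$ for all $n\ge 0$ and all index choices. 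By Lemma~\ref{opinvlem} this is equivalent to invariance of $y$ under $K_I$, completing the proof.

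The main obstacle, and the step I would spend the most care on, is the bookkeeping in the inductive "peeling" argument in both directions: one must verify that commuting a Lie derivative $L_{K_I}$ past a string $L_{K_{i_0}}\cdots L_{K_{i_n}}$ produces only terms of the form $L_{(\text{bracket lying in }\Delta)}(\cdots)y$ plus terms already known to vanish, so that an induction on $n$ closes cleanly. This uses nothing deeper than the identity $L_{[X,Y]}f = L_X L_Y f - L_Y L_X f$ and the involutivity/invariance of $\Delta$, but the combinatorics of which brackets appear must be tracked honestly; the analyticity hypothesis on the manifold is what guarantees that vanishing of all these (countably many) Lie derivatives is genuinely equivalent to the functional invariance in Definition~\ref{invariance_definition}, so I would invoke it at exactly the point where Lemma~\ref{opinvlem} is applied. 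Since the excerpt explicitly notes the calculations are reversible, I would present the converse somewhat tersely and refer to~\cite{ganesan2} for the routine details.
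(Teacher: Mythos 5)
Your proposal is correct and follows essentially the same route as the paper, which states the theorem without a formal proof but derives it in the immediately preceding paragraphs: subtracting pairs of the conditions~(\ref{cond7}) to obtain $L_{[[\cdots[K_I,K_{i_1}]\cdots],K_{i_k}]}y=0$, defining $\Delta$ as the resulting invariant involutive distribution inside $\ker(dy)$, and noting the calculations are reversible (with details deferred to~\cite{ganesan2}). Your fleshing-out of the two directions — including the strengthened induction over all elements of $\Delta$ needed to peel off the Lie derivatives, and the observation that vector fields annihilating $y$ identically are closed under brackets — is exactly the bookkeeping the paper leaves implicit.
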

A geometric representation of $\mbox{ker}(dy)$ is illustrated in figure~(\ref{kerdy}). The existence of the invariant subspace $\Delta$ is essential to decouplability of the given system. We now analyze the decouplability with the help of the active controller.
\subsection{Active Controller: $u =  \alpha(\xi) + \beta(\xi)v$}
In this section, we study the synthesis of the control parameters in the fundamental limit, that ensures complete decoupling from $H_{SE}$.
\begin{defn}
A distribution $\Delta$ is said to be controlled invariant on the analytic manifold $D_\omega$ if there exists a control pair ($\alpha, \beta$), $\alpha$, vector valued and $\beta$, matrix valued functions such that
\begin{align}
[\tilde{K}_0,\Delta ](\xi) \subset \Delta (\xi) \label{cid1}\\
[\tilde{K}_i,\Delta ](\xi) \subset \Delta (\xi) \label{cid2}
\end{align}
\[
\mbox{where, }\tilde{K}_0 = K_0 + \sum_{j=1}^r \alpha_j K_j \mbox{ and } \tilde{K}_i =
\sum_{j=1}^r \beta_{ij}K_j
\]
\end{defn}
The vector fields $\tilde{K}_0$ and $\tilde{K}_i$ are the new drift and control vector fields of system under the action of the controller ($\alpha, \beta$). The above definition of {\em controlled invariance} is a simple extension of the invariance condition ~(\ref{InvariantDistribution}) for the open loop case. It is now possible to express the necessary and sufficient conditions for the controlled system $(\tilde{K}_0, \tilde{K}_1, \cdots , \tilde{K}_r)$ to be decoupled from the interaction vector field $K_I$ just as we were able to provide conditions for open loop decouplability. The following theorem provides the necessary and sufficient conditions.
\begin{thm}\label{theorem2}
The scalar map $y(t,\xi)=\langle \xi | C(t) |\xi\rangle$ can be decoupled from interaction vector field $K_I$ via suitable analytic control parameters $(\alpha(\xi),\beta(\xi))$ if and only if there exists an involutive distribution $\Delta$ defined on the analytic manifold, such that,
\begin{align}
[K_0, \Delta] \subset \Delta + G \label{cid3}\\
[K_i, \Delta] \subset \Delta + G \label{cid4}
\end{align}
and $K_I \in \Delta \subset \mbox{ker}(dy)$ and where $G=\mbox{ span } \{K_1, \cdots , K_r \}$
\end{thm}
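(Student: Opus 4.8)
The plan is to prove this as the quantum, vector-field transcription of the classical solvability criterion for the disturbance decoupling problem, with Theorem~\ref{invsp} (the open-loop geometric characterization of output invariance) serving as the bridge between state feedback and decoupling: apply it to the \emph{closed-loop} system $(\tilde K_0,\tilde K_1,\dots,\tilde K_r)$ of equation~(\ref{fbsys}) and translate between its invariance conditions and the open-loop inclusions~(\ref{cid3})--(\ref{cid4}).

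\textbf{Necessity ($\Rightarrow$).} Suppose analytic $(\alpha(\xi),\beta(\xi))$ with $\beta$ pointwise nonsingular decouples $y$ from $K_I$. Applying Theorem~\ref{invsp} to the closed-loop system gives an involutive distribution $\Delta$ with $[\tilde K_0,\Delta]\subset\Delta$, $[\tilde K_i,\Delta]\subset\Delta$, and $K_I\in\Delta\subset\ker(dy)$. I claim this same $\Delta$ witnesses the right-hand side of Theorem~\ref{theorem2}. Since $\beta$ is nonsingular, $\mathrm{span}\{\tilde K_1,\dots,\tilde K_r\}=\mathrm{span}\{K_1,\dots,K_r\}=G$. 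For $\tau\in\Delta$, writing $\tilde K_i=\sum_j\beta_{ij}K_j$ gives $[\tilde K_i,\tau]=\sum_j\beta_{ij}[K_j,\tau]-\sum_j(L_\tau\beta_{ij})K_j$, hence $\sum_j\beta_{ij}[K_j,\tau]\in\Delta+G$ for each $i$; inverting the matrix $\beta$ yields $[K_j,\tau]\in\Delta+G$, which is (\ref{cid4}). Likewise $\tilde K_0=K_0+\sum_j\alpha_jK_j$ gives $[K_0,\tau]=[\tilde K_0,\tau]-\sum_j\alpha_j[K_j,\tau]+\sum_j(L_\tau\alpha_j)K_j\in\Delta+G$, which is (\ref{cid3}). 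Involutivity of $\Delta$ and $K_I\in\Delta\subset\ker(dy)$ are inherited unchanged.

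\textbf{Sufficiency ($\Leftarrow$).} Conversely, given an involutive $\Delta$ satisfying (\ref{cid3})--(\ref{cid4}) with $K_I\in\Delta\subset\ker(dy)$, the task is to exhibit analytic $(\alpha,\beta)$, $\beta$ nonsingular, rendering $\Delta$ controlled invariant in the sense of (\ref{cid1})--(\ref{cid2}); Theorem~\ref{invsp} applied to $(\tilde K_0,\dots,\tilde K_r)$ then immediately delivers decoupling of $y$ from $K_I$. To build the feedback, work locally on the invariant submanifold where $\dim\Delta$, $\dim(\Delta\cap G)$ and hence $\dim(\Delta+G)$ are constant; choose an analytic frame spanning a complement of $\Delta\cap G$ in $G$, and use (\ref{cid3})--(\ref{cid4}) to read off, componentwise in a basis of $(\Delta+G)/\Delta$, a pointwise linear system in the entries of $\alpha$ and $\beta$ whose consistency is guaranteed precisely because $[K_i,\Delta]$ already lies in $\Delta+G$; analyticity of a solution with $\beta$ nonsingular follows from analyticity of the coefficient data via Cramer's rule on the nonsingular block, and involutivity of $\Delta$ (analytic Frobenius) makes the conditions (\ref{cid1})--(\ref{cid2}) integrable and mutually consistent.

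\textbf{Main obstacle.} The delicate step is the sufficiency construction: extracting a \emph{nonsingular} $\beta$ (and $\alpha$) from the mere inclusion conditions needs the regularity hypotheses on the dimensions of $\Delta$, $G$, $\Delta\cap G$, $\Delta+G$, and one must verify that the induced algebraic system is consistent and admits an analytic — not merely smooth — solution. A secondary point requiring care is that $K_I=H_{SE}|\xi\rangle$ lives on the joint system--environment space, which is infinite dimensional, so the distribution/Frobenius machinery should be applied on an appropriately tame invariant submanifold carrying the dynamics — the same caveat already noted after Lemma~\ref{opralg_controller}.
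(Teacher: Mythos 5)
Your proposal is correct and follows essentially the same route as the paper: the paper defers the detailed argument to its reference \cite{ganesan2}, but the sufficiency construction it invokes — and spells out concretely in the Appendix as a pointwise linear system solved for $\beta$ (least squares / null space of $[G,V]$) and then for $\alpha$, valid locally where $\Delta$ and $G$ are nonsingular — is exactly your local controlled-invariance synthesis, and your necessity argument via nonsingularity of $\beta$ and the Leibniz rule is the standard companion step. The caveats you flag (constant-rank hypotheses, analyticity of the solution, and the infinite-dimensional environment factor) are real and are handled in the paper only by restricting to a finite-mode approximation and working locally around $\xi$.
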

The proof of the above theorem invokes a construction for the sufficiency\cite{ganesan2}, which also provides the means to synthesize the control parameters $\alpha(\xi)$ and $\beta(\xi)$. It is based on this construct that we determine the invariant subspace for the given system and utilize the same to synthesize the parameters. We shall now examine the decouplability of 1-qubit and 2-qubit systems via the conditions stated above and also present an application of the theorem in designing the control for the 2-qubit system in section~(\ref{appendix}).

\section{Examples}
We are interested in studying the decoherence of 1-qubit and 2-qubit systems in the presence of arbitrary controls. Although the theory developed here is general enough to be applicable for all scalar maps represented as bilinear forms, we focus our attention to the decouplability properties of the scalar map which represents the coherence between a set of representative basis states. For the 1-qubit system it is given by $C=|1\rangle\langle 0|$ and for the 2-qubit system it is $C=|10\rangle \langle 01|$. The 1-qubit and 2-qubit system coupled to the environment are modeled as Spin-Boson systems. Consider the 1-qubit system,
\begin{align*}
\frac{\partial \xi(t)}{\partial t} &= \frac{\omega_0}{2}\sigma_z \xi(t) + \sum_k \omega_k b_k^\dagger b_k \xi(t) + u_1 \sigma_x \xi(t) + u_2 \sigma_y \xi(t) + \sum_k \sigma_z(g_k b_k^\dagger + g_k^* b_k)\xi(t)
\end{align*}

with the scalar map, $y(t)=\langle \xi(t) |C| \xi(t)\rangle$, where $C=|1\rangle\langle 0|$, the coherence between the states $|0\rangle$ and $|1\rangle$. The open loop controls, $u_1, u_2$ acting via the Hamiltonians $\sigma_x$, $\sigma_y$ are assumed piecewise constant. We now examine the necessary and sufficient conditions for decouplability via the two approaches developed so far: {\bf Opearator Alegbra:} It can be seen that for the 1-qubit open quantum system, $[C, H_{SE}]=0$ is not satisfied for the given, $C=|1\rangle\langle 0|$ and $H_{SE}$. This is vital for both open loop invariance under arbitrary control($[\tilde{\mathcal{C}}, H_{SE}]=0$, eq.~(\ref{ic})) as well as decouplability via an active controller (lemma~\ref{opralg_controller}).{\bf Invariant Subspace:} The necessary condition for open loop invariance($K_I\in \mbox{ker}(dy)$, threorem~(\ref{invsp}), cond. (ii)) is again not satisfied by the 1-qubit system, as $K_I = \sum_k \sigma_z(g_k b_k^\dagger + g_k^* b_k)\xi(t) \notin \mbox{ker}(dy(t))$ because $L_{K_I}y(t) \neq 0$. This is also vital for decouplability via an active controller~(theorem~(\ref{theorem2})). Hence the conclusion that an open 1-qubit system is not decouplable is independently arrived at by both the formalisms.

Now, consider the following open 2-qubit system,
\begin{align*}
\frac{\partial |\xi(t)\rangle}{\partial t} =& \left( \sum_{j=1}^2 \frac{\omega_0}{2}\sigma_z^{(j)} + \sum_k \omega_k b_k^\dagger
b_k\right)|\xi(t)\rangle + \sum_k \left(\sum_j \sigma_z^{(j)}\right)(g_k b_k^\dagger + g_k^* b_k) |\xi(t)\rangle \\
&+ (u_1(t) \sigma_x^{(1)} + u_2(t) \sigma_y^{(1)} + u_3(t) \sigma_x^{(2)} + u_4(t) \sigma_y^{(2)})|\xi(t)\rangle
\end{align*}
with the scalar map, $y(t)=\langle \xi(t) |C| \xi(t)\rangle$ where $C=|01\rangle\langle 10|$ and the given $H_{SE}$. {\bf {Operator Algebra:}} As was previously shown in section~(\ref{2levelsystems}), this system has a DFS of dimension 2, $\mbox{span}\{|01\rangle, |10\rangle\}$, the states within which remain coherent in the absence of controls. It is also seen that $[C, H_{SE}]=0$. While this part of the necessary condition is satisfied, the sufficient condition for open loop invariance under arbitrary control($[\tilde{\mathcal{C}}, H_{SE}]=0$, eq.~(\ref{ic})) as well as decouplability via an active controller (lemma~\ref{opralg_controller}) is problematic as outlined in section~(\ref{2levelsystems}) and the discussion following lemma~(\ref{opralg_controller}). {\bf {Invariant Subspace:}} Again, it can be seen that the interaction vector field $K_I = \sum_{j,k} \sigma_z^{(j)}(g_k b_k^\dagger + g_k^* b_k) \xi(t)$ belongs to $\mbox{ker}(dy(t))$, (because $L_{K_I} y(t) = 0$) as required by theorem~(\ref{invsp}) for open loop invariance. However for necessary and sufficient conditions~(\ref{cid3})-(\ref{cid4}), we compute the Lie bracket of the control vector fields, $K_1,K_2\in G$, with $K_I\in \mbox{ker}(dy)\subset\Delta$ , and note that,
\begin{align}
[K_{1|2},K_I]&=[\sigma_{x|y}^{(1)}|\xi\rangle, \sum_j \sigma_z^{(j)}(g_k b_k^\dagger + g_k^* b_k) |\xi\rangle]\nonumber \\
&= c. \sum_k \sigma_{y|x}^{(1)}(g_k b_k^\dagger + g_k^* b_k)|\xi\rangle,
\label{requirement}
\end{align}
up to a constant $c$ (where $|$ in the subscript is a placeholder). It can be seen that, $[K_{1|2},K_I]$ does not belong to $\mbox{ker}(dy)$(and hence $\Delta$), nor does it belong to the control distribution $G$. Hence $[K_{1|2},K_I]$ does not belong to $\Delta$ as required by open loop invariance, nor $\Delta + G$  required for decouplability via an active controller. Again the observations made with the help of the two formalisms coincide implying that no collection of two level systems are open loop invariant under arbitrary control nor can be decoupled with an active controller in its present form. However, it can be noted from the aforementioned discussions, that by suitably changing the control Hamiltonians $H_i$(or the control vector fields $K_i$), it is possible to satisfy the necessary and sufficient conditions. Although, the coherence operator $C$ and the interaction Hamiltonian $H_{SE}$ are fixed for a given system, we can modify the control vector fields in order to meet the necessary and sufficient conditions. It is for this reason that the 1-qubit system is not decouplable(as $[C, H_{SE}]\neq 0$) but the 2-qubit system can be decoupled via suitable modifications. The rest of the paper is devoted to studying the means by which such a modification can lead to complete decoherence control. For this purpose, we employ a scalable construction and the quantum controller, the single ancillary qubit in order to affect such a transformation. Henceforth, we will confine ourselves to the study of decouplability of the 2-qubit system and present the ensuing quantum internal model principle.

\section{An Ancillary Quantum Controller}
Consider the following construction employing an ancillary qubit as a quantum controller, with the additional property that its decoherence or strength of the environmental interaction can be modulated externally at will. With this construction it is necessary to maintain only one qubit within a modulated environment, as opposed to a whole system of finite number of qubits. This greatly simplifies the realizability and operability of a practical quantum computer in an ambient setting, without the need for large supercooled environments. This system is now allowed to interact with our qubits of interest through an Ising type coupling $J_1, J_2$ (figure~(\ref{auxsys_schema})). Such a hardware is currently under investigation and development~\cite{nec-riken} with encouraging experimental results. The state vector is now the total wave function of system+ancillary+environment. Both the qubit systems are assumed to interact with the common environment with the only additional requirement that the ancillary qubit's decoherence rate be controllable. Physically this amounts to a coherent qubit with controllable environmental interaction. The scalability and advantages of this construction are analyzed in the next section.
\begin{figure*}[!htpb]
\centering
\subfigure{\includegraphics[width=3.0in, height=1.75in]{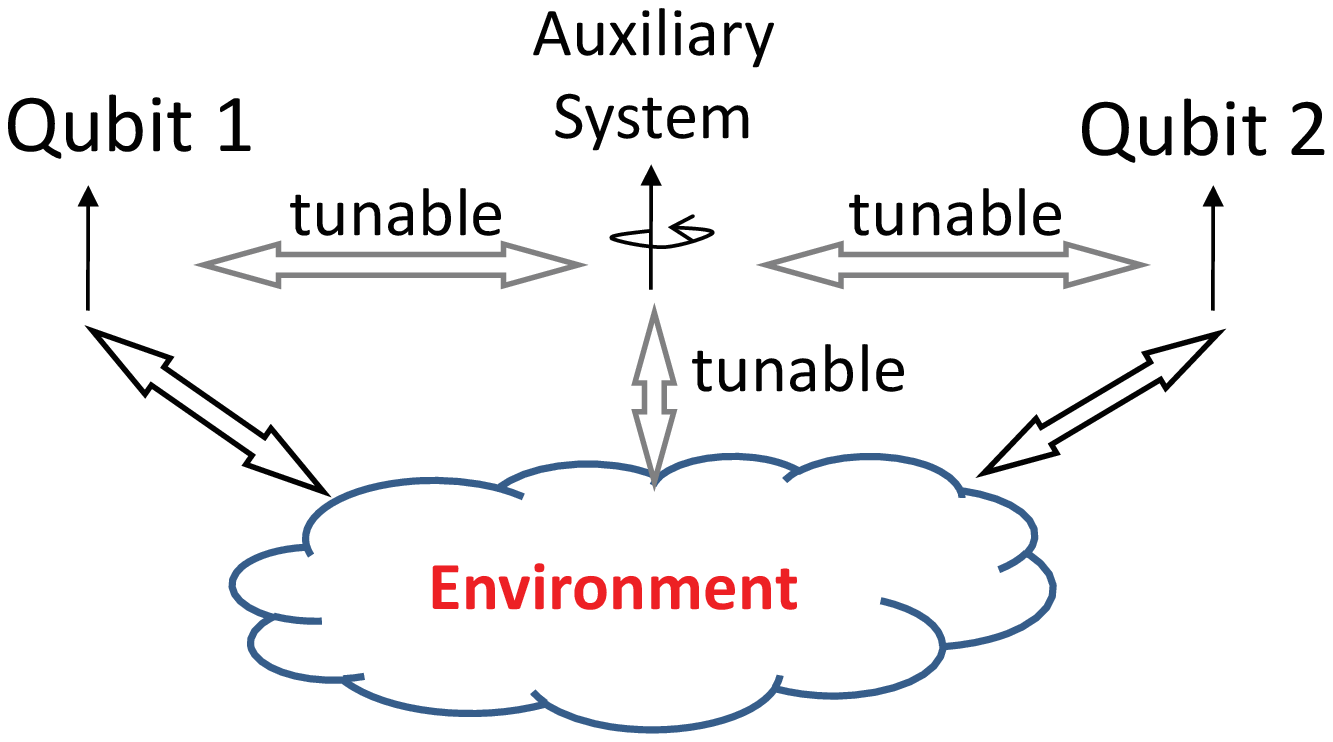}}
\subfigure{\includegraphics[width=3.0in, height=2.5in]{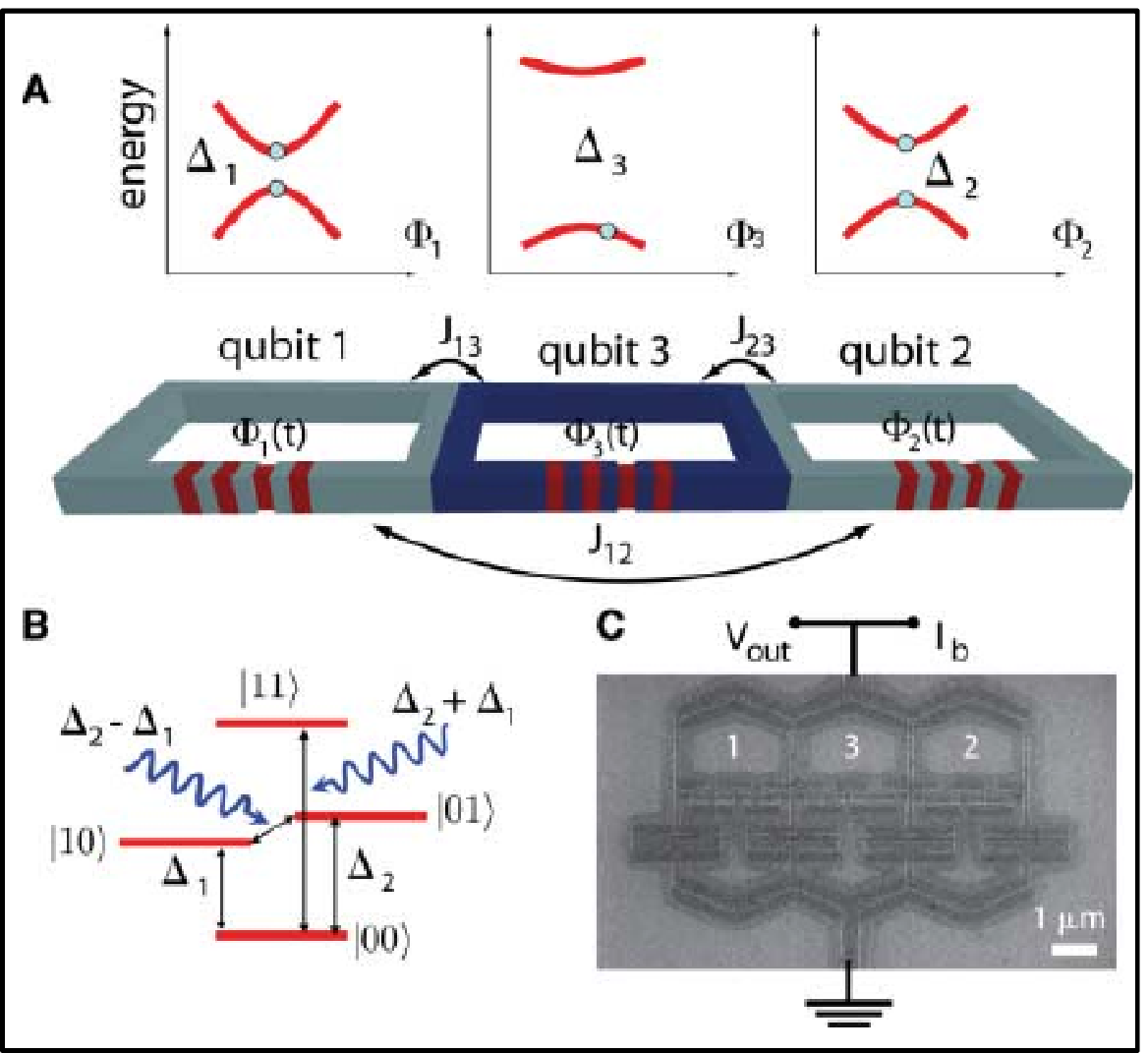}}
\caption{(Left) Schematic of the ancillary quantum controller, that is allowed to interact with the 2-qubit system and the thermal bath via a tunable interaction. (Right) Prospective candidate for experimental implementation of the ancillary quantum controller\cite{nec-riken}. {\scriptsize Courtesy of Y. Nakamura, RIKEN, NEC, Japan.}}\label{auxsys_schema}
\end{figure*}
The control system governing the mechanics following the Schr\"{o}dinger equation~(\ref{auxsys}) is given by,
\begin{align}
&\frac{\partial |\xi(t)\rangle}{\partial t} =\left( \sum_{j=1}^2 \frac{\omega_0}{2}\sigma_z^{(j)} + \sum_k
\omega_k b_k^\dagger b_k\right)\xi(t) + \sum_{j,k}\sigma_z^{(j)}(g_k b_k^\dagger + g_k^* b_k)\xi(t)+ \left(u_1(t) \sigma_x^{(1)} + u_2(t) \sigma_y^{(1)}\right. \nonumber \\
&+ u_3(t) \sigma_x^{(2)}+ u_4(t) \sigma_y^{(2)}+ \frac{\omega_0}{2}\sigma_z^{(b)} + u_5\sigma_x^{(b)} \left.+u_6\sigma_y^{(b)}+u_7 J_1 \sigma_z^{(1)}\sigma_z^{(b)}+ u_8 J_2 \sigma_z^{(2)}\sigma_z^{(b)}\right)\xi(t)\nonumber \\
&+ u_9 \sum_k \sigma_z^{(b)}(w_k b_k^\dagger + w_k^* b_k)\xi(t) \label{auxsys}
\end{align}
with $\sigma_{x|y|z}$now skew Hermitian and the same scalar map as before. The superscripts $^{(1)}, ^{(2)}$ and $^{(b)}$ denote operators acting on the Hilbert spaces of the first qubit, second qubit and the ancillary qubit respectively. It is now seen that,
\[
[K_{1|2},K_I]= c. \sum_k \sigma_{y|x}^{(1)}(g_k b_k^\dagger + g_k^* b_k) |\xi\rangle \mbox{ (from eq.~(\ref{requirement}))}
\]
belongs to the control algebra generated by the additional vector fields introduced by the ancillary system, (i.e), $[K_i, K_I]$ belongs to the Lie algebra generated by the control vector fields $K_1, \cdots, K_9$ of the above system. Hence restructuring the system such that the linear span of the control vector fields and Lie algebra of the control vector fields coincide would ensure that the necessary and sufficient conditions given by equations~(\ref{cid3}) and (\ref{cid4}) are satisfied.

\section{The Restructured Quantum Control System}
The ancillary qubit is primarily used to generate vector fields that can help decouple the system from the vector field $K_I$. The additional vector fields that can be generated with the help of ancillary system, help satisfy the necessary and sufficient conditions for decouplability as shown in this section.  Let,
\[
H_0 =\sum_{j=1}^2 \frac{\omega_0}{2}\sigma_z^{(j)} + \sum_k \omega_k b_k^\dagger
b_k +\frac{\omega_0}{2}\sigma_z^{(b)}
\]
denote the Hamiltonians of, qubits 1\&2, environment and the ancillary system,
\[
H_{SE} = \sum_k\left(\sum_j \sigma_z^{(j)}\right)(g_k b_k^\dagger + g_k^* b_k)
\]
the System + Environment decoherence Hamiltonian,
\[
H_1 = \sigma_x^{(1)}, H_2 = \sigma_y^{(1)}, H_3 = \sigma_x^{(2)}, H_4 =
\sigma_y^{(2)}
\]
the control Hamiltonians acting on qubits 1 and 2,
\[
H_5 = \sigma_x^{(b)}, H_6 = \sigma_y^{(b)}, H_7 = J_1
\sigma_z^{(1)}\sigma_z^{(b)}, H_8 = J_2
\sigma_z^{(2)}\sigma_z^{(b)}
\]
the control Hamiltonians for the ancillary system, the Ising type coupling to qubits 1 and 2, and
\[
H_9 = \sum_k \sigma_z^{(b)}(w_k b_k^\dagger + w_k^* b_k)
\]
the controllable interaction of the ancillary qubit with the environment.

These 9-control Hamiltonians($H_1\cdots H_9$) along with $H_0$ and $H_{SE}$ govern the evolution of the system. The controls are implemented by the actual hardware(ancillary quantum controller) and corresponding fields $(u_1,\cdots u_9)$ with $u_7$ and $u_8$ being the strength of the Ising coupling. With this setup it is now possible to {\em generate} additional control vector fields by suitably manipulating the field strengths. With the additional controls generated, it is possible to satisfy the necessary and sufficient conditions~(\ref{cid3}, \ref{cid4}). This way we are able to come up with the "restructured" quantum control system via the following "control pulse" maneuvers. For example, consider the following maneuver, with $u_6$ and $u_9$,
\begin{align*}
&u_6(\tau)=1, \mbox{ and } u_9(\tau)=0, \mbox{ for } \tau\in[0,t]\\
&u_6(\tau)=0, \mbox{ and } u_9(\tau)=1, \mbox{ for } \tau\in[t,2t]\\
&u_6(\tau)=-1, \mbox{ and } u_9(\tau)=0, \mbox{ for } \tau\in[2t,3t]\\
&u_6(\tau)=0, \mbox{ and } u_9(\tau)=-1, \mbox{ for } \tau\in[3t,4t]
\end{align*}
The corresponding unitary time evolution operator at the end of time instant $4t$ is given by,
\begin{align*}
U(4t) &= e^{(-iH_6 t)}e^{(-iH_9 t)}e^{(iH_6 t)}e^{(iH_9 t)}\\
&= \exp(-i[H_6,H_9]t^2+\mathcal{O}(t^3))
\end{align*}
the series expansion by Campbell-Baker-Hausdorff formula. In the limit that $t=dt\rightarrow0$. The effective direction of evolution is given by the commutator of the corresponding Hamiltonians, but to the second order in time. Hence we can devise a control vector field in the direction given by the commutators of the corresponding Hamiltonians $H_6$ and $H_9$, where,
\[
[H_6,H_9] = c.\sigma_x^{(b)}\sum_k(w_k b_k^\dagger + w_k^* b_k)
\]
where $c$ is a {\it real} constant for a skew Hermitian $H_6$ and $H_9$. In fact it is possible to generate a direction of evolution with arbitrary strength corresponding to repeated commutators of the Hamiltonians $H_1\cdots H_9$ of the
physical system~(\ref{auxsys}). The commutators of tensor product of operators are calculated according to,
\begin{align*}
[A\otimes B, C\otimes D] = CA\otimes [B,D]+[A,C]\otimes BD
\end{align*}
With the control field $H_8$ we can generate the following direction in conjunction with the previous maneuver
$[H_8,H_5]=c'J_2\sigma_z^{(2)}\sigma_y^{(b)}$ and also,
\begin{align}
[[H_8,H_5],[H_6,H_9]]&=
c_1.[J_2\sigma_z^{(2)}\sigma_y^{(b)},\sigma_x^{(b)}\sum_k(w_k b_k^\dagger +
w_k^* b_k)]\nonumber\\
&=c.\sigma_z^{(2)}\sigma_z^{(b)}\sum_k(w_k b_k^\dagger + w_k^* b_k)
\label{comm1}
\end{align}
A similar maneuver between controls $u_4,u_6 \mbox{ and }u_8$, generates the following direction of evolution,
\begin{align}
[H_4,H_8] &= [\sigma_y^{(2)},J_2\sigma_z^{(2)}\sigma_z^{(b)}] =
c.\sigma_x^{(2)}\sigma_z^{(b)} \label{comm2}
\end{align}
where $c$ is a real constant for a skew Hermitian $H_4, H_8$. Again, from operating on equations~(\ref{comm1}) and (\ref{comm2}) we get,
\begin{align}
[[H_4,H_8],[[H_8,H_5],[H_6,H_9]]] = &c_1[\sigma_x^{(2)}\sigma_z^{(b)},
\sigma_z^{(2)}\sigma_z^{(b)}\sum_k(w_k b_k^\dagger + w_k^* b_k)]\nonumber\\
&=c_1.[\sigma_x^{(2)},\sigma_z^{(2)}].(\sigma_z^{(b)})^2.\sum_k(w_k b_k^\dagger
+ w_k^* b_k)\nonumber\\
&=c.\sigma_y^{(2)}.\mathbb{I}^{(b)}.\sum_k(w_k b_k^\dagger + w_k^*
b_k)\label{comm3}
\end{align}
where $\mathbb{I}^{(b)}$ is the identity operator on the ancillary subsystem. Hence we have generated an effective controllable coupling between $\sigma_y^{(2)}$ and the environment with the help of the ancillary qubit. It is important to note that the Hamiltonian so obtained by the above control maneuver now acts trivially on the Hilbert space of the ancillary qubit, a property which is found to be extremely useful. It is also possible to generate the $\sigma_x^{(2)}$ counterpart of the above coupling by a similar maneuver, given by,
\begin{align}
c.\sigma_x^{(2)}.\mathbb{I}^{(b)}.\sum_k(w_k b_k^\dagger + w_k^*
b_k)\label{comm4}
\end{align}
Again by a symmetric and identical argument we can generate a coupling between the environment and qubit 1, which is given by,
\begin{equation}
c.\sigma_y^{(1)}.\mathbb{I}^{(b)}.\sum_k(w_k b_k^\dagger + w_k^* b_k) \mbox{ and }
c.\sigma_x^{(1)}.\mathbb{I}^{(b)}.\sum_k(w_k b_k^\dagger + w_k^* b_k)\label{comm6}
\end{equation}
It can be seen that the above vector fields are what are required in equation~(\ref{requirement}) for the Lie bracket $[K_{1|2},K_I]$ to be contained within $\Delta + G$. Now, noting that the constants $c$ in the above equations can be controlled independently and arbitrarily, we can write the preliminary form of the {\it actual} control system which achieves disturbance decoupling. Gathering terms (\ref{comm3})-(\ref{comm6}), we construct the following control system for $\frac{\partial |\xi(t)\rangle}{\partial t}$ given by equation~(\ref{restructured_system}). In the following control system, the environment is approximated to be of single mode and of three energy levels\cite{ganesan2}. The vast majority of the interaction energy is stored in the fundamental mode and first few energy states of the oscillator.
\begin{figure*}[!htp]
\hrulefill
\normalsize
\begin{align}
\frac{\partial |\xi(t)\rangle}{\partial t} =& \left( \sum_{j=1}^2 \frac{\omega_0}{2}\sigma_z^{(j)} + \sum_k \omega_k b_k^\dagger
b_k\right)|\xi(t)\rangle + \sum_{j=1}^2 \sigma_z^{(j)}(g b^\dagger + g^* b) |\xi(t)\rangle \nonumber\\
&+ \sum_{i=0}^2 u_{1i}\sigma_x^{(1)}(w b^\dagger + w^* b)^i|\xi(t)\rangle + \sum_{i=0}^2 u_{2i}\sigma_y^{(1)}(w b^\dagger + w^* b)^i|\xi(t)\rangle \nonumber\\&
+ \sum_{i=0}^2 u_{3i}\sigma_x^{(2)}(w b^\dagger + w^* b)^i|\xi(t)\rangle + \sum_{i=0}^2 u_{4i}\sigma_y^{(2)}(w b^\dagger + w^*
b)^i|\xi(t)\rangle\nonumber\\&+ \sum_{i=0}^2 u_{5i}\sigma_x^{(1)}\sigma_z^{(2)}(w b^\dagger + w^* b)^i|\xi(t)\rangle + \sum_{i=0}^2 u_{6i}\sigma_y^{(1)}\sigma_z^{(2)}(w b^\dagger + w^* b)^i|\xi(t)\rangle\nonumber\\&
+ \sum_{i=0}^2 u_{7i}\sigma_z^{(1)}\sigma_x^{(2)}(w b^\dagger + w^* b)^i|\xi(t)\rangle + \sum_{i=0}^2 u_{8i}\sigma_z^{(1)}\sigma_y^{(2)}(w b^\dagger + w^* b)^i|\xi(t)\rangle \label{restructured_system}
\end{align}
\hrulefill
\vspace*{4pt}
\end{figure*}
The 24 restructured controls $u_{10}\cdots u_{12},u_{20}\cdots u_{22},\cdots,u_{30},\cdots u_{82}$ could be thought of as "software" generated controls by manipulation of strengths of actual fields from system~(\ref{auxsys}). Though, in the system above, we consider only first 3 states of the environment, it is possible to consider any finite number of environmental interaction terms by simply including additional control terms in system~(\ref{restructured_system}), via the same analysis. We are now in a position to use the new controls to decouple the scalar map from the environmental interaction. This restructured system satisfies the necessary and sufficient condition for decouplability because the disturbance vector field, $K_I = H_{SE}|\xi\rangle$, is contained within $\mbox{ker}(dy)$ where $y=\langle \xi|01\rangle \langle 10|\xi\rangle$. To see this, one can evaluate $L_{K_I} y(t)$ and notice that it vanishes. The sufficient condition can be seen from the fact that $[K_I, K_i] \in G$, where $G$ is span of control vector fields of system~(\ref{restructured_system}). Hence the necessary and sufficient conditions,
\begin{eqnarray}
&&(i) K_I\in \Delta \subset \mbox{ker}(dy)\nonumber\\
&&(ii) [K_I, K_i]\in \Delta + G, \mbox{ where } G=\mbox{span}\{K_1\cdots,K_{24}\}
\end{eqnarray}
are satisfied. \\
\textbf{Scalability:} It can seen that in order to perform the restructuring of a quantum system with finite number of qubits, it is necessary to employ only one ancillary quantum controller. This guarantees the scalability of the construction and decouplability properties of resulting system.

In summary, the finite system and environment approximation has enabled us to come up with the control system whose coherence can be perfectly decoupled from the environmental interaction as shown in section~(\ref{results}). By ensuring the maximum rank of control matrix $\beta$, the controllability properties of the original 2-qubit quantum system on the manifold is fully preserved. In summary, given the existence of an invariant subspace $\Delta \subset \mbox{ker}(dy)$, the above conditions for decouplability can be summarized as,
\begin{center}
\begin{tabular}{|c|c|c|}
\hline
Open loop Uncontrolled &$[\Delta,K_0]\subset\Delta$ &$[\Delta,K_i]\subset\Delta$\\\hline
Controlled &$[\Delta,K_0]\subset\Delta+G$ &$[\Delta,K_i]\subset\Delta+G$\\\hline 
\end{tabular}\\
\end{center}
The analyses performed thus far also allows us to summarize the results in the form of the following corollary.
\begin{cor}
For any finite $N$-qubit open quantum system acted upon by arbitrary user generated control via Pauli matrices and under the influence of decoherence interaction $H_{SE}$, the coherence between the basis states $|i\rangle$ and $|j\rangle$ cannot be rendered immune to $H_{SE}$ without the action of the quantum controller which is the ancillary quantum system.
\end{cor}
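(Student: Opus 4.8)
## Proof Proposal for the Corollary

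\textbf{Overall strategy.} The plan is to show that for any finite $N$-qubit system governed by an equation of the same structural form as the $2$-qubit example of Section~\ref{calcs} — drift $H_0$, Pauli control Hamiltonians $H_i$, and collective-dephasing interaction $H_{SE}$ — the necessary conditions for decouplability of the coherence operator $C = |i\rangle\langle j|$ fail, and that this failure cannot be repaired by the state-feedback controller $u=\alpha(\xi)+\beta(\xi)v$ alone. The corollary then follows by contraposition: decouplability $\Rightarrow$ the ancillary qubit (or some operator acting nontrivially on a larger Hilbert space) must be present. I would organize the argument around the two necessary conditions already isolated in Lemma~\ref{opralg_controller} and Theorem~\ref{theorem2}: (a) $[C,H_{SE}]=0$, and (b) $[\tilde{\mathcal{C}}(t),H_{SE}]\subset\tilde{\mathcal{C}}(t)$, equivalently $[\Delta,K_i]\subset\Delta+G$ with $K_I\in\Delta\subset\ker(dy)$.

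\textbf{Step 1: reduce to the support-counting picture.} First I would invoke the computation already carried out in Section~\ref{2levelsystems}: for collective $\sigma_z$ dephasing on $N$ qubits, the distribution generated from $C=\sum c_{ij}|i\rangle\langle j|$ under $\mathrm{ad}_{H_0}$ and time-derivatives is $\tilde{\mathcal{C}}=\mathrm{span}\{\sum_{i,j}c_{ij}|i\rangle\langle j|\,(w(j)-w(i))^K : K\ge 0\}$, where $w(\cdot)$ denotes Hamming weight. Hence $[C,H_{SE}]=0$ holds iff $w(i)=w(j)$. So the only coherences that survive even \emph{open loop without control} are those between equal-weight basis states; for $|i\rangle\ne|j\rangle$ with $w(i)\ne w(j)$ the first necessary condition $[C,H_{SE}]=0$ already fails, and since this condition appears identically in Lemma~\ref{opralg_controller} as a prerequisite for active decouplability, no controller of the form $u=\alpha(\xi)+\beta(\xi)v$ helps. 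This disposes of all unequal-weight pairs immediately.

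\textbf{Step 2: the equal-weight case.} For $w(i)=w(j)$ the first condition is satisfied, so I must show the \emph{second} condition fails in the absence of the ancilla. Here I would follow the template of equation~\eqref{requirement}: take a single-qubit Pauli control $H_\ell=\sigma_x^{(m)}$ or $\sigma_y^{(m)}$ acting on qubit $m$, form $K_\ell=H_\ell|\xi\rangle$, and compute $[K_\ell,K_I]$. Using $[A\otimes B,C\otimes D]=CA\otimes[B,D]+[A,C]\otimes BD$ and the fact that $[\sigma_{x}^{(m)},\sigma_z^{(m)}]\propto\sigma_y^{(m)}$, one gets $[K_\ell,K_I]\propto \sum_k \sigma_{y}^{(m)}(g_k b_k^\dagger+g_k^* b_k)|\xi\rangle$ — an operator that couples a single-qubit Pauli on the \emph{system} Hilbert space to the bath. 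The key observations are then: (i) this vector field is not in $\ker(dy)$ in general (one checks $L_{[K_\ell,K_I]}y\ne 0$ because $C=|i\rangle\langle j|$ with $w(i)=w(j)$ does not commute with $\sigma_y^{(m)}\otimes(\text{bath})$ once a single $\sigma_x$ has broken the weight symmetry), so it cannot lie in $\Delta$; and (ii) it is not in $G=\mathrm{span}\{K_1,\dots,K_r\}$ either, since every $K_i$ in the unrestructured system is of the form $(\text{Pauli on system})\otimes\mathbb I_{\text{bath}}$, whereas $[K_\ell,K_I]$ carries a nontrivial bath factor $(g_k b_k^\dagger+g_k^*b_k)$. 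Therefore $[K_\ell,K_I]\notin\Delta+G$, violating condition~\eqref{cid4} of Theorem~\ref{theorem2}.

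\textbf{Step 3: rule out the feedback fix, and conclude.} The last point to nail down is that enlarging $\Delta$ or choosing clever $\alpha,\beta$ cannot rescue condition~\eqref{cid4}. The structural reason — already flagged in the discussion after Lemma~\ref{opralg_controller} — is that $\Delta$ must lie in $\ker(dy)$ and be built from the control/drift vector fields, all of which act as (system operator)$\otimes\mathbb I_{\text{bath}}$, while $K_I$ and its brackets with the $K_i$ genuinely entangle system and bath degrees of freedom in a direction not reachable from $G$; adding $G$ to $\Delta$ supplies only system-local directions. Hence no finite iteration of $\mathrm{ad}_{K_i}$ on $K_I$ stays inside $\Delta+G$. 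Since the only way the quoted constructions of Sections~\ref{calcs}--\ref{appendix} escape this obstruction is by introducing control Hamiltonians that act nontrivially on a larger Hilbert space — which is precisely the role of the ancillary qubit and its tunable coupling $H_9=\sum_k\sigma_z^{(b)}(w_kb_k^\dagger+w_k^*b_k)$ — the coherence $|i\rangle\langle j|$ cannot be made immune to $H_{SE}$ under arbitrary Pauli control without the ancillary quantum controller. \textbf{The main obstacle} I anticipate is making Step~3 fully rigorous rather than merely structural: one needs a clean invariance argument that the $\Delta+G$ generated from system-local vector fields, intersected with $\ker(dy)$, can never absorb the bath-entangling bracket $[K_\ell,K_I]$ — essentially a statement that the "system $\otimes\,\mathbb I_{\text{bath}}$" subalgebra together with $H_{SE}$ generates directions transverse to it, which should be provable by tracking the bath-operator degree (number of $b_k^\dagger,b_k$ factors) as a grading that the controller cannot lower.
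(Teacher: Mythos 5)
Your proposal is correct and follows essentially the same route as the paper: the corollary is asserted there as a summary of the preceding 1- and 2-qubit analyses, namely the failure of $[C,H_{SE}]=0$ for unequal-weight pairs (your Step 1, via the DFS computation of Section~\ref{2levelsystems}) and the failure of $[K_{1|2},K_I]\in\Delta+G$ from equation~(\ref{requirement}) together with the system-local-versus-joint-Hilbert-space obstruction discussed after Lemma~\ref{opralg_controller} (your Steps 2--3). If anything you are more careful than the paper, which never addresses the point you flag at the end --- that Theorem~\ref{theorem2} quantifies existentially over all admissible $\Delta$, so one must rule out the bracket landing in $\Delta+G$ as a \emph{sum} rather than in either summand; your proposed bath-degree grading is a reasonable way to close that gap.
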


\section{Results}\label{results}
\begin{figure}[!htpb]
\begin{center}
\epsfysize=3in \epsfxsize=3.75in \epsffile{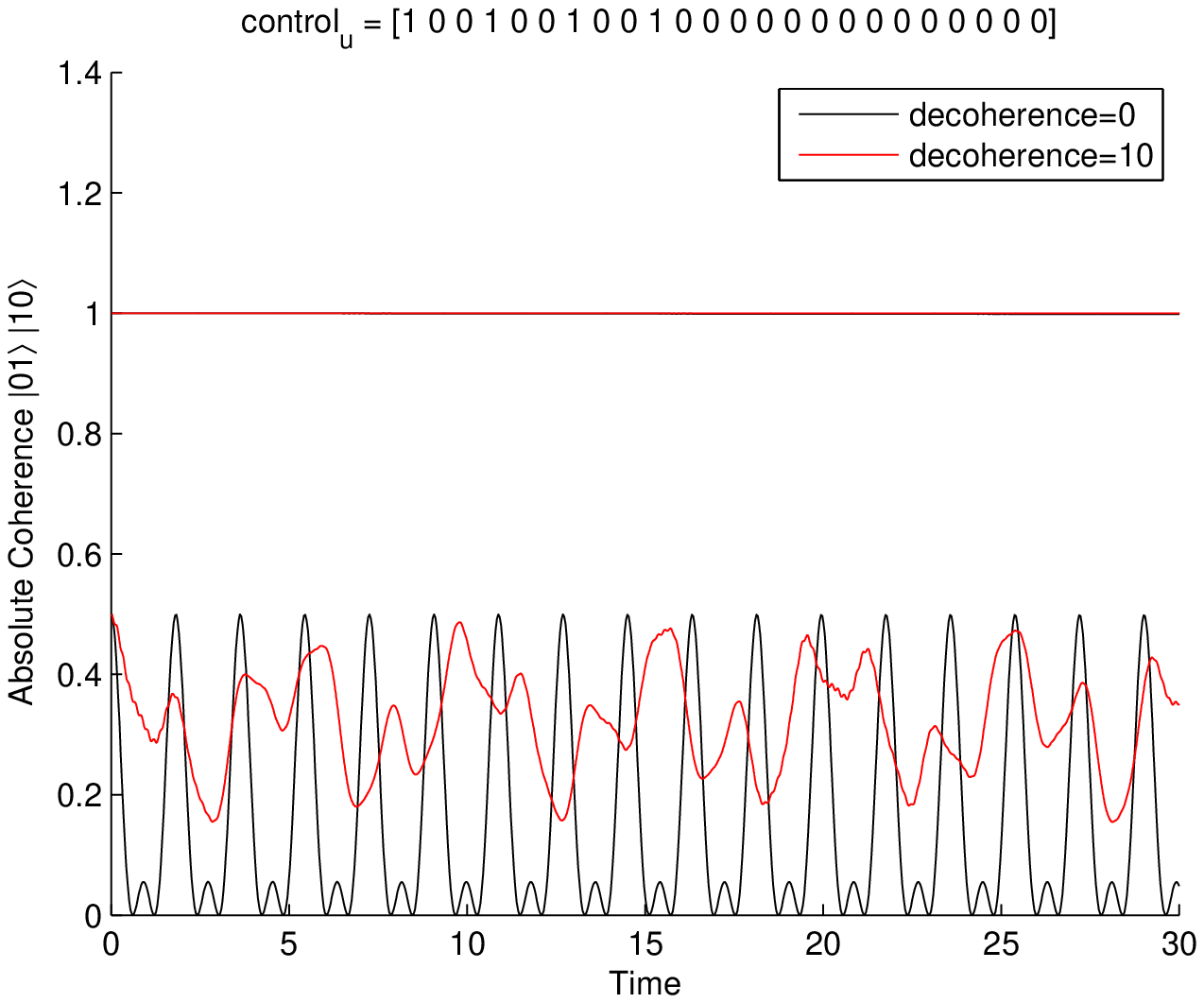}
\end{center}
\caption{Open loop behavior of the 2 qubit system interacting with 3 level
environment}\label{openloopbehavior}
\end{figure}
\begin{figure*}[!htpb]
\centering
\subfigure{\includegraphics[width=3.20in, height=3in]{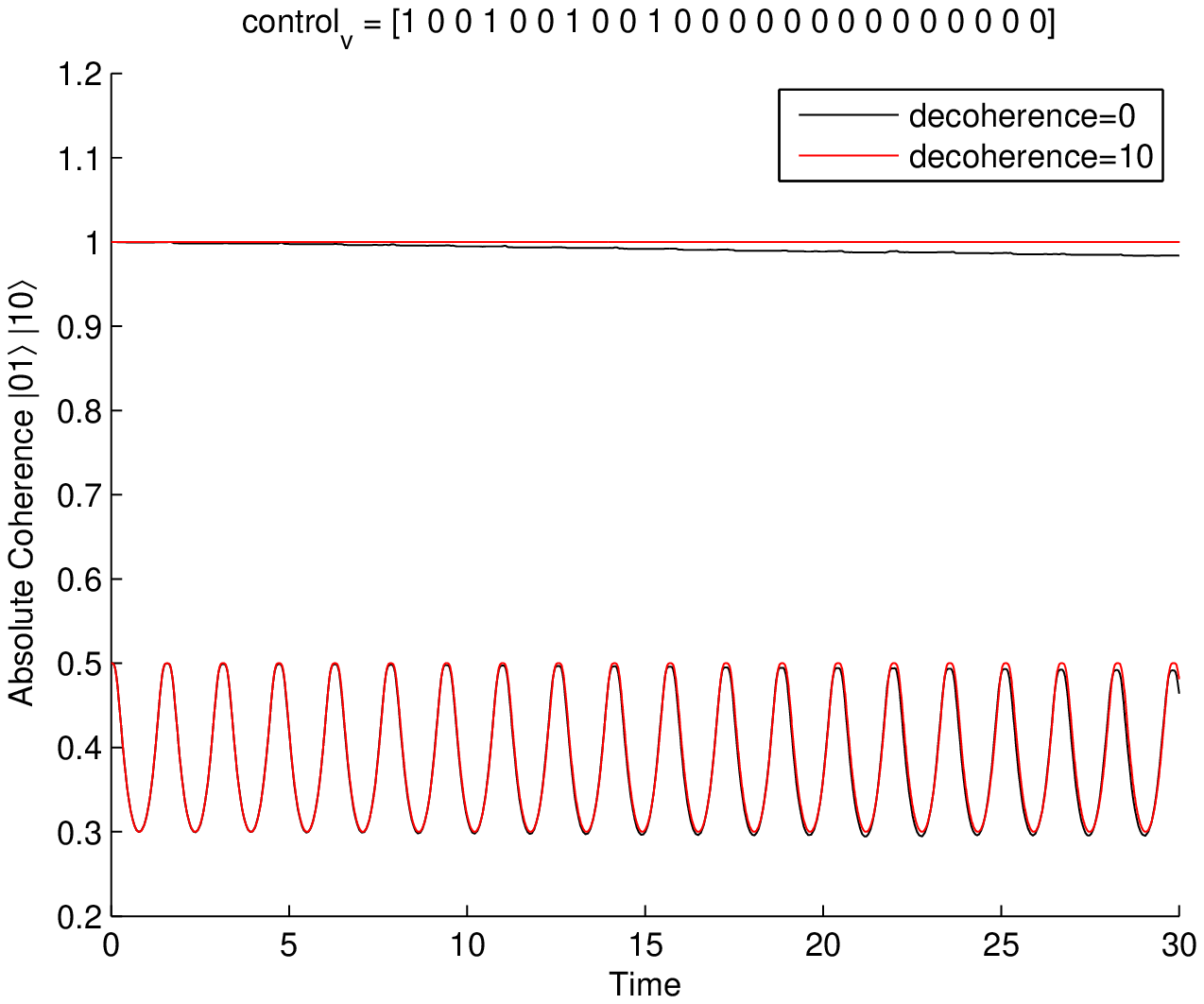}}
\subfigure{\includegraphics[width=3.20in, height=3in]{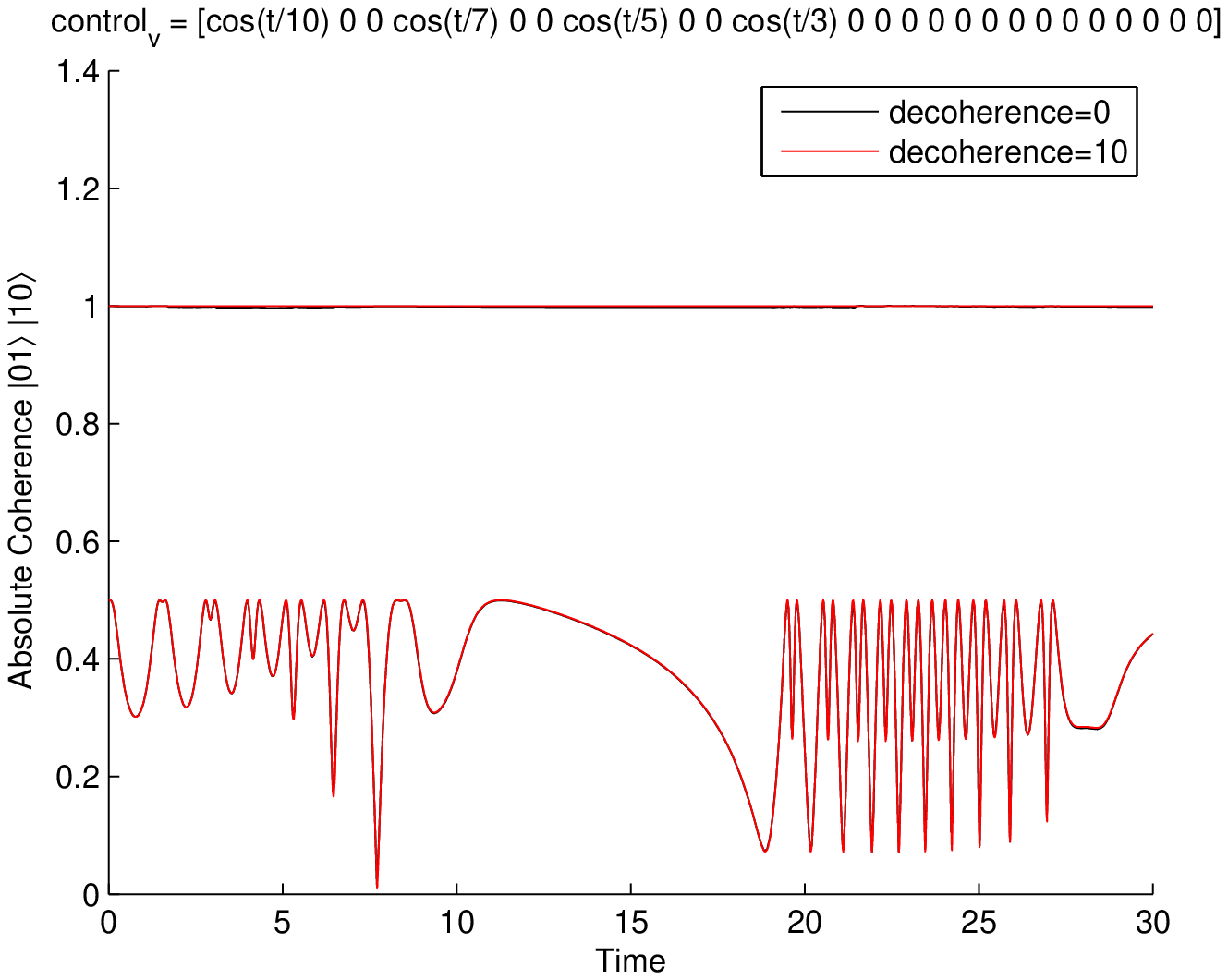}}
\subfigure{\includegraphics[width=3.20in, height=3in]{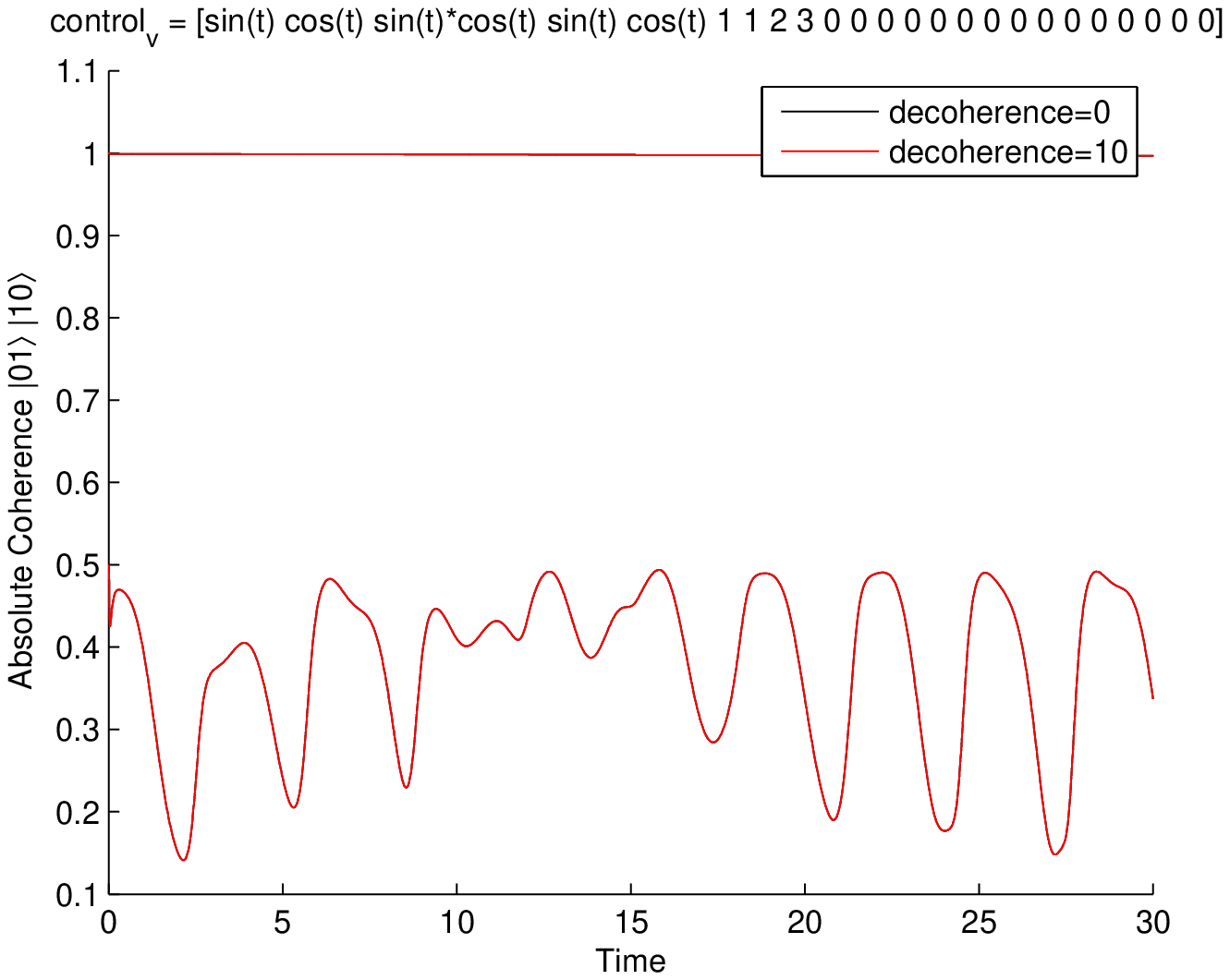}}
\subfigure{\includegraphics[width=3.20in, height=3in]{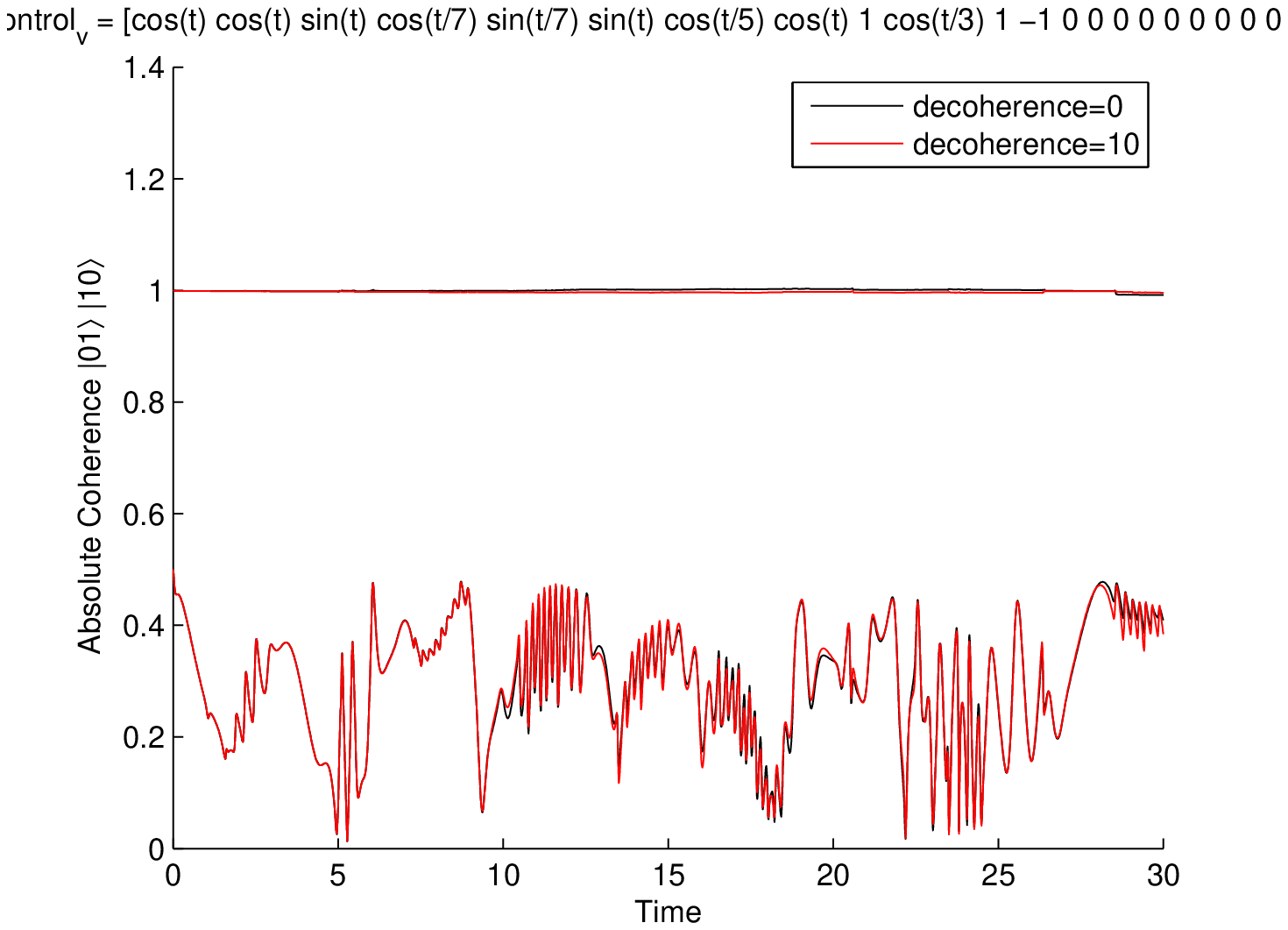}}
\subfigure{\includegraphics[width=3.20in, height=3in]{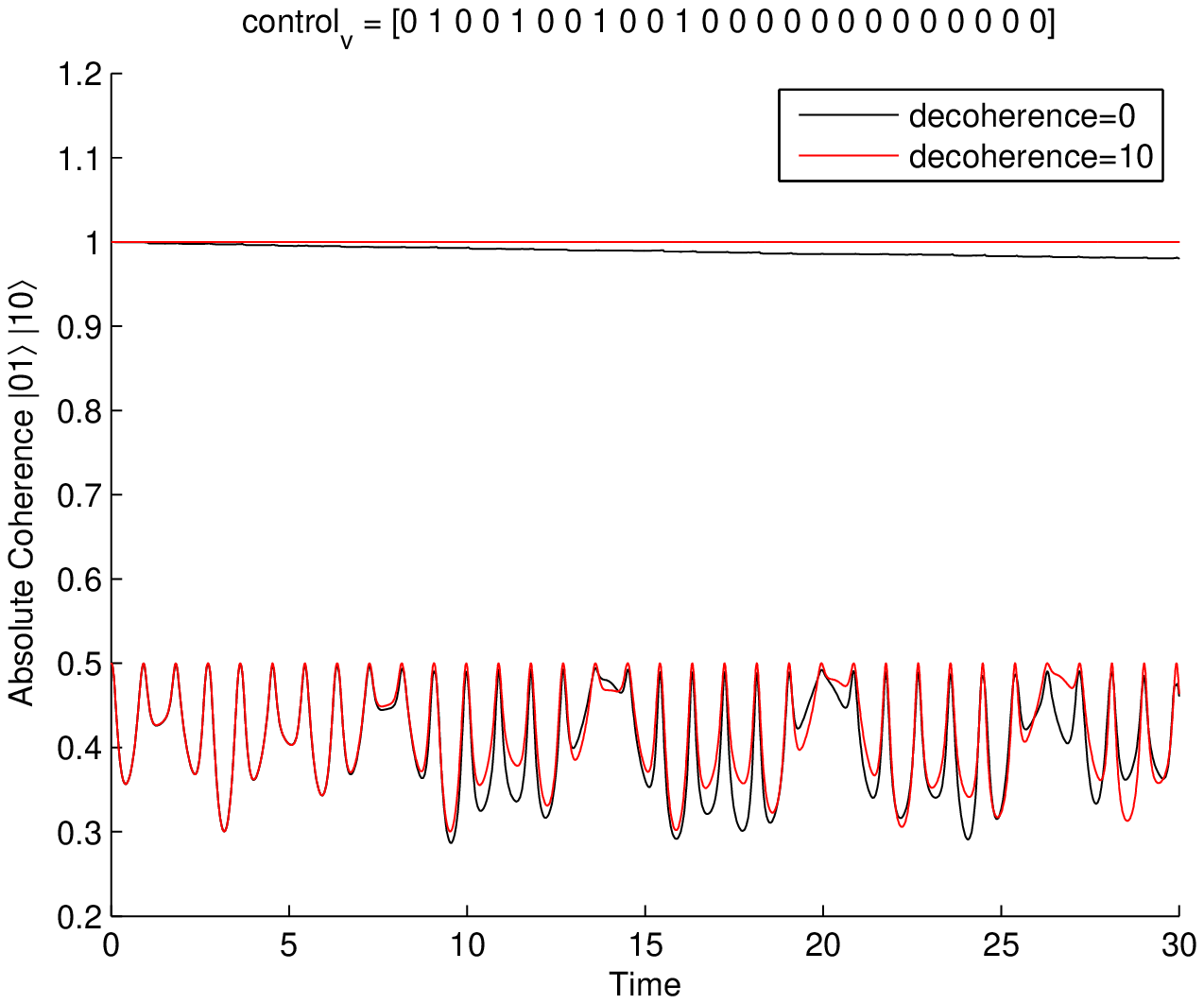}}
\subfigure{\includegraphics[width=3.20in, height=3in]{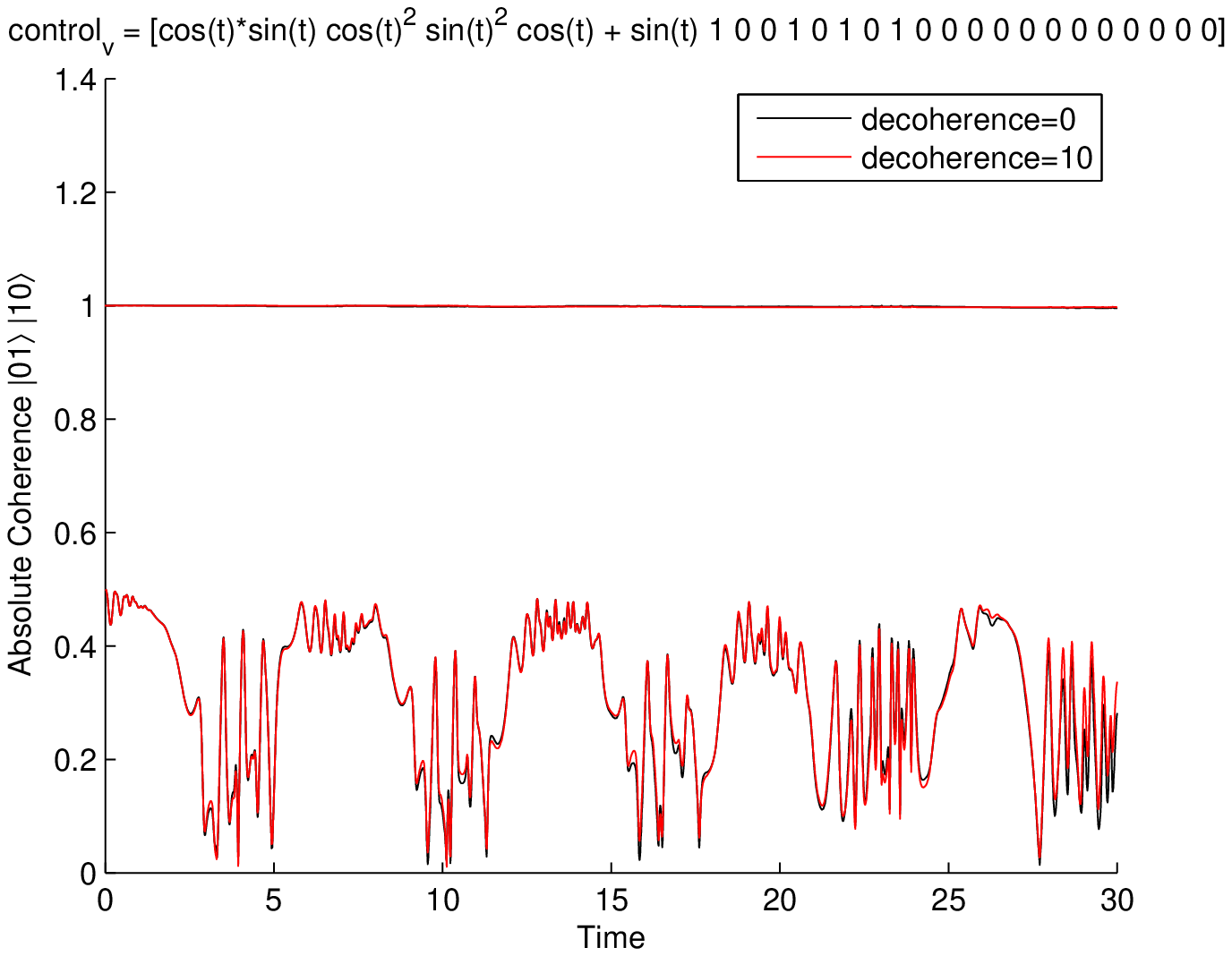}}
\caption{Coherence between the basis states $|01\rangle$ and $|10\rangle$ of a 2-qubit system under the action of the ancillary quantum controller is preserved under arbitrary controls. The controls were chosen to be either constant, time varying or both.}
\label{controlled_behavior}
\end{figure*}
The above system was simulated with 2, two-level interacting systems of interest and 1 ancillary qubit interacting with the environment. The goal was to study the effect of decoherence Hamiltonian on the coherence between $|01\rangle$ and $|10\rangle$ under {\em arbitrary} control as stated in the problem statement. We present the simulation results based on the above control strategy for two different strengths of decoherence interaction (red, for interaction strength 10 and black for no interaction) and different(both constant and time varying) control strengths. We approximate the environment by a single mode and first three levels for simplicity, as the vast majority of the environmental energy, modeled as infinite harmonic oscillators in different modes, is stored within the lowest mode and in lower energy states. This can be seen by the nature of the decoherence  interaction\cite{openqusys} and the coherent state of the harmonic oscillator\cite{louisell}. However, it can also be seen that, in order to include the higher order environmental interactions, one merely needs to add more control terms, following the same analysis. In this work, we consider the particular model of environmental interaction only to demonstrate the applicability of the technique and clarity of presentation. The initial coherence of the state between $|10\rangle$ and $|01\rangle$ is set to 0.5. The absolute value of the coherence with and without the decoherence interaction is presented along with the norm of the state in both the cases. Figure~(\ref{openloopbehavior}) is the open loop behavior of the 2-qubit system and Figure~(\ref{controlled_behavior}) is the behavior of the decouplable system~(\ref{restructured_system}), acted upon by the ancillary quantum controller. For the open loop system the coherence between states $|10\rangle$ and $|01\rangle$ is influenced in the presence of decoherence interaction. Though the coherence is irretrievably lost in reality, the periodicity is due to the finiteness of the system and environment. With the action of ancillary quantum controller and synthesis of control parameters as outlined in section~\ref{appendix}, we see identical behavior of the coherence for different values of decoherence strength and useful control, thus effectively rendering the coherence between states $|10\rangle$ and $|01\rangle$ immune to $H_{SE}$. The behavior is seen to match for any set of analytical control functions, thus achieving perfect decoupling. In addition, the value of $\hbar$ in the Schr\"{o}dinger equation was set to '1', which gives rise to scaled time in the simulation. The slight deviation of the norm from '1' is an artifact of the numerical imperfections.

\section{Quantum Internal Model Principle}
In this section we present the need for the Quantum Internal Model principle. Classical internal model principle for linear systems~\cite{francis}\cite{huangjie} outlines conditions for robust tracking in the presence of disturbance $d(t)$. Consider the linear system with disturbance, 
\begin{eqnarray}
\dot{x} = Ax+Bu+E_d d,\\
y=Cx+Du+F_d d.
\end{eqnarray}
with the tracking error $e=Cx+Du+F_d d - r$. The {\em exosystem} consists of the reference input and the plant noise, both generated by linear autonomous differential equations,
\begin{eqnarray}
\dot{r} = A_{1r} r, r(0) = r_0;\\
\dot{d} = A_{1d} d, d(0) = d_0
\end{eqnarray}
with arbitrary initial states. The robust output regulation, where the tracking error is driven to zero, not only requires a dynamic state feedback but also that the controller {\em mimic} the exosystem in terms of its characteristic polynomial. Classical disturbance decoupling~\cite{isidori}\cite{isidori1}\cite{isidori2}, on the other hand requires only the knowledge of system parameters and not the model of disturbance in order to completely decouple the output. However, quantum decoherence control, which is similar in formulation to classical disturbance decoupling is possible only with the knowledge of the environmental interaction, which is analogous to classical robust output regulation. This allows us to propose the quantum internal model principle with the following characteristics,\\
{\bf Quantum and Classical Internal Model Principle}
\begin{itemize}
\item Quantum Internal Model principle aims at {\em disturbance rejection} with the help of a ancillary quantum controller and the knowledge of the model of interaction with the environment.
\item Classical Internal Model principle aims at perfect trajectory tracking via feedback, which involves the knowledge of the disturbance generator (as well as the desired trajectory generator) viz. the exosystem,
\end{itemize}
{\bf Quantum and Classical Disturbance Rejection}
\begin{itemize}
\item Quantum disturbance decoupling, which is the underlying motivation of Quantum Internal Model requires complete knowledge of the model of the environment as well the corresponding model of decoherence, in the combined system+environment state space.
\item Classical disturbance decoupling only requires the model of interaction within the system's state space.
\end{itemize}
Hence the knowledge of environmental interaction with the system for complete decoupling makes quantum internal model principle salient and important within the framework of systems and control. The original 2-qubit system had to be augmented with the (internal) model of the environment entering via the control $u_9$ in equation~(\ref{auxsys}) so as to restructure the vector fields to act non-trivially on the environment Hilbert space. Hence the knowledge of the model of interaction with the environment, i.e, the decoherence Hamiltonian $H_{SE}$ is essential to successfully controlling decoherence.
\begin{figure}[!htpb]
\begin{center}
\epsfysize=1in \epsfxsize=2in \epsffile{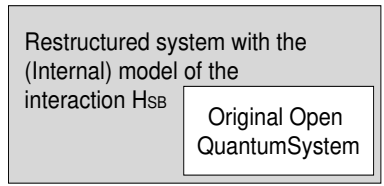}
\end{center}
\caption{The original open quantum system acts as the skeletal structure for the larger restructured system.}\label{restructure}
\end{figure}
Figure~(\ref{restructure}) depicts the nature of the original and restructured systems in the sense that the latter is larger and is derived from the original by taking into account the model of the environmental interaction $H_{SE}$, whose closest classical analog is the disturbance generator denoted by $A_{1d}$. The Figure~(\ref{modelfig}) outlines the schematic of control system for the decoupling problem, where the coherence measure for the controlled open quantum system and the corresponding closed system are identical. In summary, the structure of the system needed to be altered in order to,
\begin{itemize}
\item Artificially induce coupling between qubits $1$, $2$ and the environment with the help of the ancillary qubit.
\item Generate vector fields in the higher order of the environment operator via a fast-action open-loop control.
\end{itemize}
Hence it was necessary to modify the core system in more ways than one in order to perform decoupling. It is to be noted the above control strategy is a hybrid of fast-action open loop control and smooth analytic control ($\alpha$ and $\beta$) in order to achieve perfect decoherence elimination.
\begin{figure}[!htpb]
\begin{center}
\epsfysize=1.75in \epsfxsize=3.0in \epsffile{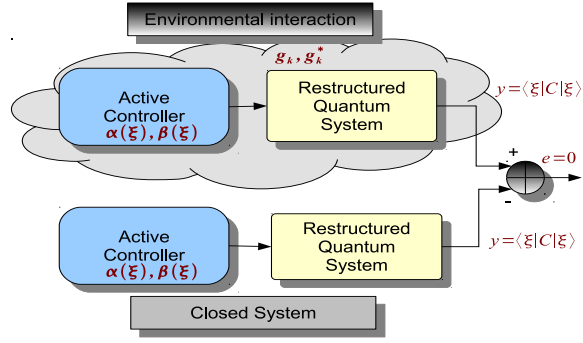}
\end{center}
\caption{The difference between coherence measures from the open quantum system and the closed quantum systems must vanish.}\label{modelfig}
\end{figure}
\section*{Conclusion}
In this article we visited an Internal Model Principle that is uniquely related to quantum systems in light of disturbance decoupling and decoherence control. The tensorial model of interaction of the quantum system with the environment can be skillfully exploited to completely decouple the system from the same. Such a result and its implication are first of its kind in the literature to the best of authors' knowledge. The ideas here presented could not only help further decoherence control but also influence the design of future quantum and classical control systems. In addition a framework for enhanced disturbance decoupling was laid wherein the entire control algebra can be used to effectively decouple a larger class of systems than just the linear span of the control vector fields.


\section{Appendix}\label{appendix}
By following the proof of Theorem~(\ref{theorem2}) above as outlined in \cite{ganesan2}, we synthesize the control parameters  $\alpha(\xi)$ and $\beta(\xi)$ for the System~(\ref{restructured_system}). Let the restructured control vector fields in the system Hilbert Space be given by,
\begin{equation*}
\begin{array}{cc}
g_{1s} = \sigma^{(1)}_x|\xi\rangle; &g_{5s} = \sigma^{(1)}_z\otimes\sigma^{(2)}_x|\xi\rangle\\
g_{2s} = \sigma^{(1)}_y|\xi\rangle; &g_{6s} = \sigma^{(1)}_z\otimes\sigma^{(2)}_y|\xi\rangle\\
g_{3s} = \sigma^{(2)}_x|\xi\rangle; &g_{7s} = \sigma^{(1)}_x\otimes\sigma^{(2)}_z|\xi\rangle\\
g_{4s} = \sigma^{(2)}_y|\xi\rangle; &g_{8s} = \sigma^{(1)}_y\otimes\sigma^{(2)}_z|\xi\rangle
\end{array}
\end{equation*}
along with three environmental operators $(\mathbbm{I},D,D^2)$, we get 24
control vector fields,
\begin{align*}
\{g_1,\cdots g_{24}\} &= \{\sigma^{(1)}_x, \sigma^{(1)}_y, \sigma^{(2)}_x, \sigma^{(2)}_y, \sigma^{(1)}_z\otimes\sigma^{(2)}_x,\\
&\sigma^{(1)}_z\otimes\sigma^{(2)}_y, \sigma^{(1)}_x\otimes\sigma^{(2)}_z, \sigma^{(1)}_y\otimes\sigma^{(2)}_z\}\otimes \{\mathbbm{I},D,D^2\}|\xi\rangle
\end{align*}
where the multiplication is carried out in the usual order, where $D=(wb^\dagger+w^*b)$, is the displacement operator for Quantum Harmonic Oscillator. Define $G = \{g_1,\cdots, g_{24}\}$. The system and environmental identity operators are suppressed for ease of notations and they are assumed to be present where it is clear from the context. The above control vector fields are generated by "software" in that, the control vector fields are produced by maneuvering action of fast action control pulses applied to the ancillary qubit enhanced open loop system~(\ref{auxsys}).

In order to construct the invariant subspace for the restructured quantum control system with the control vector fields as above, it can be seen that the vector fields,
\begin{eqnarray}
\delta_1 &=& (\sigma^{(1)}_z+\sigma^{(2)}_z)|\xi\rangle,\nonumber\\
\delta_2 &=& \sigma^{(1)}_z\otimes\sigma^{(2)}_z|\xi\rangle\nonumber\\
\delta_3 &=& \mathbbm{I}|\xi\rangle,\nonumber\\
\delta_4 &=& (\sigma^{(1)}_x\otimes \sigma^{(2)}_x - \sigma^{(1)}_y\otimes \sigma^{(2)}_y)|\xi\rangle\nonumber\\
\delta_5 &=& (\sigma^{(1)}_x\otimes \sigma^{(2)}_y + \sigma^{(1)}_y\otimes \sigma^{(2)}_x)|\xi\rangle\nonumber
\end{eqnarray}
commute with the vector field generated by the coherence operator, $C|\xi\rangle = (\sigma^{(1)}_x\otimes \sigma^{(2)}_x + \sigma^{(1)}_y\otimes \sigma^{(2)}_y)|\xi\rangle = |01\rangle\langle10|\xi\rangle$, which implies that $L_{\delta_i} y(t)=0$, or that the vector fields $\delta_i$ are within $\mbox{ker}(dy)$. It can also be seen that the the corresponding Lie brackets $[\delta_i,\delta_j]$ lie within $\Delta = \mbox{span}\{\delta_1,\cdots,\delta_5\}$.

Hence the invariant subspace for the above quantum system is identified to be generated by 5 Hermitian operators in the system's Hilbert space(but not all linearly independent for all the values of the states). This along with three commuting environmental operators($\mathbbm{I},D,D^2$) produces 15 vectors on the analytical manifold which span the invariant subspace for the restructured system. It can also be seen that since the 5 system Hamiltonians do not always generate linearly independent vectors($\delta_1,\cdots, \delta_5$), the rank of the invariant subspace is dependent on the point $\xi$ and hence is singular. The methodology outlined in the proof to construct the control parameters $\alpha(\xi)$ and $\beta(\xi)$ locally around the point $\xi$ works for non-singular invariant distribution $\Delta$ and non-singular control distribution $G$ as well. We can now complete the basis for the tangent space $T_\xi(M)$ with the three commuting vector fields to $\Delta$ which do not belong to $\mbox{ker}(dy)$,
\begin{eqnarray}
d_1 &=& (\sigma^{(1)}_z - \sigma^{(2)}_z)|\xi\rangle\nonumber\\
d_2 &=& (\sigma^{(1)}_x\otimes \sigma^{(2)}_x + \sigma^{(1)}_y\otimes \sigma^{(2)}_y)|\xi\rangle\nonumber\\
d_3 &=& (\sigma^{(1)}_x\otimes \sigma^{(2)}_y - \sigma^{(1)}_y\otimes \sigma^{(2)}_x)|\xi\rangle\nonumber
\end{eqnarray}
The commutation relations are as follows,
\begin{equation*}
\begin{array}{ccc}
&\bullet[\delta_i,\delta_j] \in \Delta; &\bullet [\delta_i,g_j]\in \Delta + G\\
&\bullet[\delta_i,d_j] \in \Delta, d_j\notin \mbox{ker}(dy); &\bullet [d_i, g_j] \in G
\end{array}
\end{equation*}
{\bf Setup:}
\begin{itemize}
\item Let $K = \mbox{rank}\{\delta_1, \cdots , \delta_5\}$ and $\{\delta_1, \cdots, \delta_K\}$ the corresponding vector fields with $\Delta = \mbox{span}\{\delta_1, \cdots, \delta_K\}$.
\item Let $q$ be the minimum number such that $\mbox{rank}\{\Delta, d_1,\cdots,d_3\} = \mbox{rank}\{\Delta, d_1,\cdots,d_q\}$, $q\in{1,2,3}$ and let $\{d_1,\cdots,d_q\}$ be the corresponding linearly independent vector fields with $V_q \triangleq  \mbox{span}\{d_1,\cdots,d_q\}$.
\item Let $r$ be the minimum number such that $\mbox{rank}\{\Delta, V_q, g_1,\cdots,g_{24}\} = \mbox{rank}\{\Delta, V_q, g_1,\cdots,g_r\}$ with $V \triangleq \mbox{span}\{\Delta, V_q, g_1,\cdots,g_r\}$
\end{itemize}
Let the vectors $v_1,\cdots, v_r$ be the linearly independent vectors of $V$ according to the construction above.\\
{\bf The Algorithm:}
\begin{itemize}
\item Solve the equation,
\[
\sum_{j=1}^{24} g_{j}\beta_{ji} = \sum_{k=1}^K c_{ik}v_k + v_{K+i} + \sum_{k=K+q+1}^r c_{ik} v_{K+q+k}
\]
for $i=1,\cdots,q$ with real coefficients $\beta_{ij}$. This is obtained by rewriting the above equation as,
\begin{align*}
[g_1,\cdots,g_{24}, -v_1,\cdots -v_K, -v_{K+q+1}, \cdots, -v_r]\times&&\\
[\beta_{1i},\cdots,\beta_{24i}, c_{i1},\cdots, c_{iK}, c_{K+q+1}, \cdots, c_{ir}]^T&& = v_{K+i}
\end{align*}
A least square solution to the above equation yields the local numerical values of the parameter $\beta_{ji}\in \mathbbm{R}$ for all rows and first $i=1,\cdots q$ columns.
\item Next, solve the equation $\sum_{j=1}^{24} g_j \beta_{ji} = \sum_{k=1}^r c_{ik} v_k$, for $i=q+1,\cdots, 24$ to obtain the rest of the parameters $\beta_{ji}$. This is again obtained by setting,
\begin{align*}
[g_1,\cdots,g_{24}, -v_1,\cdots -v_r]\times [\beta_{1i}, \cdots, \beta_{24i}, c_{i1},\cdots, c_{ir}]^T = 0
\end{align*}
Hence the null space of the matrix $[G, V]$ provides values for the parameters.
\item Finally the control parameters $\alpha$ are obtained by the solution to the equation,
\[
\sum_{j=1}^{24} \alpha_{j} g_j + K_0 = \sum_{k=1}^K c_k v_k + \sum_{k=K+q+1}^r c_k v_k
\]
or the least square solution to the matrix vector equation,
\begin{align*}
[g_1,\cdots,g_{24}, -v_1,\cdots -v_K, -v_{K+q+1}, \cdots, -v_r]\times&&\\
[\alpha_1, \cdots, \alpha_{24}, c_1, \cdots, c_K, c_{K+1+1}, \cdots, c_r]^T&& = -K_0
\end{align*}
\end{itemize}

\end{document}